\newtheorem{definition}{Definition}
\newtheorem{theorem}{Theorem}
\newtheorem{corollary}[theorem]{Corollary}
\newtheorem{lemma}[theorem]{Lemma}
\newtheorem{remark}{Remark}
\definecolor{darkblue}{rgb}{0.15,0.35,0.55}
\definecolor{reddish}{rgb}{.8, 0.2, 0.2}
\newcommand{\EX}{\mathbf{E}}
\long\def\ca#1\cb{} 
\newcommand{\becs}{\begin{cases}}
\newcommand{\bem}{\begin{matrix}}
\newcommand{\dya}[1]{|#1\rangle\langle#1|}
\newcommand{\dyad}[2]{|#1\rangle\langle#2|}
\newcommand{\encs}{\end{cases}}
\newcommand{\enm}{\end{matrix}}
\newcommand{\ket}[1]{|#1\rangle }
\newcommand{\mat}[1]{\left(\begin{matrix}#1\end{matrix}\right)}
\newcommand{\ot}{\otimes }
\newcommand{\Tr}{{\rm Tr}}
\newcommand{\AC}{{\mathcal A}}
\newcommand{\BC}{{\mathcal B}}
\newcommand{\DC}{{\mathcal D}}
\newcommand{\EC}{{\mathcal E}}
\newcommand{\HC}{{\mathcal H}}
\newcommand{\IC}{{\mathcal I}}
\newcommand{\LC}{{\mathcal L}}
\newcommand{\NC}{{\mathcal N}}
\newcommand{\XC}{{\mathcal X}}
\newcommand{\YC}{{\mathcal Y}}
\newcommand{\rB}{\textbf{r}}
\newcommand{\al}{\alpha }
\newcommand{\bt}{\beta }
\newcommand{\gm}{\gamma }
\newcommand{\Dl}{\Delta }
\newcommand{\lm}{\lambda }
\newcommand{\Lm}{\Lambda }
\newcommand{\sg}{\sigma }
\DeclareMathOperator{\tr}{tr}
\DeclareMathAlphabet{\pazocal}{OMS}{zplm}{m}{n}
\title{Unital Qubit Queue-channels: Classical Capacity and Product Decoding}
\author{Vikesh Siddhu,  Avhishek Chatterjee, Krishna Jagannathan, Prabha Mandayam,  Sridhar
    Tayur%
\thanks{V. Siddhu was with JILA, University of Colorado/NIST, 440 UCB, Boulder, CO 80309, USA (email: vsiddhu@protonmail.com). He is presently with IBM Quantum, IBM T.J. Watson Research Center, New York.}
 \thanks{A.~Chatterjee and K.~Jagannathan are with the Department of Electrical Engineering, IIT Madras, Chennai, India (e-mail: \{avhishek, krishnaj\}@ee.iitm.ac.in).}%
\thanks{P. Mandayam is with the Department of Physics, IIT Madras, Chennai, India (e-mail: prabhamd@iitm.ac.in).}
\thanks{S. Tayur is with the Quantum Computing Group, Tepper School of Business, and Department of ECE, Carnegie Mellon University, Pittsburgh PA 15213, USA (email: stayur@cmu.edu).}}
\begin{document}
\maketitle

\begin{abstract}
    Quantum queue-channels arise naturally in the context of buffering in
    quantum networks, wherein the noise suffered by the quantum states depends
    on the time spent waiting in the buffer. It has been shown that the
    upper-bound on the classical capacity of an additive queue-channel has a
    simple expression, and is achievable for the erasure and depolarizing
    channels \cite{MandayamJagannathanEA20}.  In this paper, we characterise
    the classical capacity for the class of unital qubit queue-channels, and
    show that  a simple product (non-entangled) decoding strategy is
    capacity-achieving. As an intermediate result, we  derive an explicit
    capacity achieving product decoding strategy for any i.i.d. unital  qubit
    channel, which could be of independent interest. As an important special
    case, we also derive the capacity and optimal decoding strategies for a
    symmetric generalized amplitude damping (GAD) queue-channel. Our results
    provide useful insights towards designing practical quantum communication
    networks, and highlight the need to explicitly model the impact of
    buffering.
\end{abstract}

\section{Introduction}
\label{sec:intro}
There is considerable and growing interest in designing and setting up
large-scale quantum communication networks \cite{PirandolaBraunstein16,
wehner2020}. To that end, understanding the fundamental capacity limits of
quantum communications in the presence of noise is of practical importance. In
this context, the inevitable buffering of quantum states during communication
tasks acts as an additional source of decoherence. One concrete example of such
buffering occurs at intermediate nodes or quantum repeaters, where quantum
states have to be stored for a certain \emph{waiting time} until they are
processed and transmitted again~\cite{nemoto2016}. Indeed, while quantum states
wait in buffer for transmission, they continue to interact with the
environment, and suffer a \emph{waiting time dependent}
decoherence~\cite{repeater2018, repeater_waitingtime20}. In fact, the longer a
qubit waits in a buffer, the more it decoheres.

To characterise the impact of buffering on quantum communication, researchers
have recently combined queuing models with quantum noise models
\cite{MandayamJagannathanEA20}. In particular, the buffering process inherently
introduces correlations across the noise process experienced by consecutive
qubits, since the waiting times are correlated according to the queuing
dynamics. Thus, to properly characterise the decoherence introduced due to
buffering, we need to look `beyond i.i.d' quantum channels and noise models.

Unital qubit channels are ubiquitous models for
decoherence~\cite{NielsenChuang11, Holevo12} in the communication medium as
well as  in the buffer. Though the former mode of decoherence has been the main
topic of interest in quantum Shannon theory, recent research has started to
focus on the impact of the buffering on the design of a practical quantum
communication system \cite{repeater2018,
repeater_waitingtime20,nemoto2016,MandayamJagannathanEA20}.

The i.i.d. unital channel has been studied extensively and its classical
capacity has been characterized~\cite{King02, DattaRuskai05, FukudaGour17,
GillardBelinEA19, King14, MendlWolf09}. The classical capacity is known to be
additive and is achieved by non-entangled (product) encoding.  However, to the
best of our knowledge, the following questions have not been resolved: (a) can
product \emph{decoding} achieve the classical capacity of the channel, and (b)
if so, is there an explicit quantum measurement that achieves the capacity?
These questions are well-motivated regardless of any buffering considerations,
because entangled measurement (non-product decoding) requires a reliable
quantum processor. Motivated by the practical issue of decoherence during
buffering, we further ask: what is the impact of decoherence at the
transmission buffer on the classical capacity and does it change the answers to
questions (a) and (b)?

\subsection{Related Work}
\label{SrelWork}

Our work interleaves different aspects of quantum communication networks, from
quantum Shannon theory to queuing theory. In quantum Shannon theory, one
studies ultimate limits for transmitting information in the presence of quantum
noise. One simple model of study is transmission of classical information
across qubits experiencing i.i.d noise. However even this simple model can
exhibit a variety of complex behaviour~\cite{BennettFuchsEA97, Fuchs97,
KingNathansonEA02}.
The qubit generalized amplitude damping channel~(GADC) is a relevant model of
noise in a variety of physical contexts including communication over optical
fibers or free space~\cite{YuenShapiro78, Shapiro09,
ZouLiEA17,RozpedekGoodenoughEA18}, $T_1$ relaxation due to coupling of spins
with a high temperature environment~\cite{ChuangNielsen97,
MyattKingEA00,TurchetteMyattEA00}, and super-conducting based quantum
computing~\cite{ChirolliBurkard08}. Quantum capacities of the i.i.d. GADC have
been studied~(see~\cite{KhatriSharmaEA20} and reference therein). Of particular
interest to us are expressions for the Holevo information of the GADC, found
in~\cite{LiZhenMaoFa07} using techniques from~\cite{Cortese02, Berry05}, and
channel parameters~\cite{KhatriSharmaEA20} where additivity of the GADC Holevo
information is known.
While the primary focus of quantum Shannon theory~\cite{Wilde19} has been to
study the classical and quantum capacities of stationary, memoryless quantum
channels~\cite{Holevo12}, recently there has been a spurt of activity in
characterizing the capacities of quantum channels in non-stationary, correlated
settings. We refer to~\cite{RMP2014} for a recent review of the different
capacity results obtained in a context of quantum channels that are not
independent or identical across channel uses. In particular, we focus on the
quantum information-spectrum approach in~\cite{HayashiNagaoka03}, which
provides bounds on the classical capacity of a general, non-i.i.d. sequence of
quantum channels. 
The idea of a quantum queue-channel was originally proposed
in~\cite{qubits_ncc2019} as a way to model and study the effect of decoherence
due to buffering or queuing in quantum information processing tasks. The
classical capacity of quantum queue-channels has been studied for certain
classes of quantum channels, and a general upper bound is known for additive
quantum queue-channels, additionally the upper bound can be achieved for for
the \emph{erasure} and \emph{depolarising}
queue-channels~\cite{MandayamJagannathanEA20}. The effect of queuing-dependent
errors on classical channels has been studied earlier~\cite{ChatterjeeSeoEA17},
with motivation drawn from crowd-sourcing.  More recently, a dynamic
programming based framework for characterising the queuing delay of quantum
data with finite memory size has been proposed in~\cite{qdelay_2020}. Finally,
we note that ideas of queuing theory have also been used to study aspects of
entanglement distribution over quantum networks such as
routing~\cite{EntRouting_2019}, switching, and buffering~\cite{guha2021}.

\subsection{Our Contributions:}

We show that the upper-bound on the classical capacity of additive
queue-channel is achievable for any unital qubit queue-channel if the encoder
has non-causal side information regarding the waiting times of the qubits. In
the absence of this side information, we show that for the class of unital
qubit queue-channels that are `Pauli-ordered,' the same upper-bound can be
achieved. In both cases, we show that non-entangled projective measurements can
achieve the capacity and provide explicit descriptions of the encoders and the
projective measurements.  As an intermediate result, we derive a capacity
achieving non-entangled projective measurement for any i.i.d. unital qubit
channel. To the best of our knowledge, this result has not been discussed in
the literature, and could be of independent interest.

An important example of a unital channel is the symmetric generalized amplitude
damping (GAD) channel. The GADC, $\AC_{p,n}$, is typically parametrized by two
quantities, $n$ and $p$, both between zero and one. At $n=1/2$, one obtains a
symmetric GADC which is unital, however for other values of $n$ the channel is
not unital. We construct three different 'natural' induced channels and find
one of them, $\NC_3$ a binary symmetric channel, to have the largest Shannon
capacity $C(\NC_3)$ for all $n$ and $p$. Typically, $C(\NC_3)$ is found to be
less than the GAD channel's Holevo information, however at $n=1/2$ we find
$C(\NC_3)$ equals the  Holevo information for all $p$.  Next, we study a
symmetric GAD queue-channel with $n=1/2,$ and the parameter $p$ is made an
explicit function of the waiting time $w$ of each qubit. Such a symmetric GAD
queue channel is unital and hence additive, which enables the use of the
capacity upper bound obtained in \cite{MandayamJagannathanEA20}. Further, we
propose a specific encoding for the GAD queue channel, which induces a binary
symmetric classical queue channel. We show that an achievable rate of this
binary symmetric queue channel matches the upper bound enforced by additivity
arguments, thus settling the capacity of the GAD queue channel, and giving us a
fully classical capacity achieving scheme for the encoder and decoder. Finally,
we obtain useful insights for designing practical quantum communication systems
by employing queuing theoretic analysis on the queue-channel capacity results.

\begin{figure}[ht]
    \centering
    \includegraphics[width=16 cm, height = 3.8 cm]{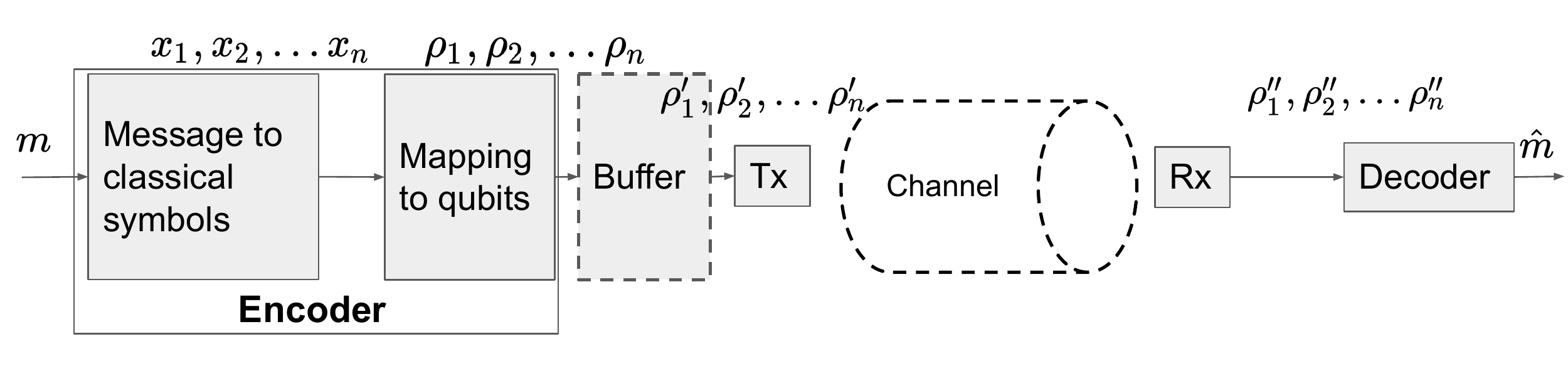}
    \caption{Qubit $\rho_i$ decoheres to $\rho_i'$ while waiting in the buffer
    for transmission. This further decoheres to $\rho''_i$ while passing
    through the channel. Decoherence in the buffer depends on the waiting time
    and results in non-i.i.d. "effective" decoherence.}
    \label{fig:schematic}
\end{figure}
%


The paper is organized as follows. In the Sec.~\ref{SrelWork} we discuss
related work. To keep this discussion somewhat self-contained, in
Sec.~\ref{SPrelim}, we provide an extended discussion of induced channels,
classical capacities of quantum channels, and non-i.i.d queue-channel
capacities. 
Sec.~\ref{UnitalQueue} discusses unital qubit queue-channels and includes a
capacity achieving product encoding-decoding strategy for i.i.d. unital
channels~(see Th.~\ref{thm:unitalPOVM}).
In Sec.~\ref{SAmpDamp}, we analyze the generalized amplitude damping
channel~(GADC).  Here we discuss and compare capacities of various natural
choices for induced channels of a GADC~(see Fig.~\ref{Fig_ICHolevo}).  In
Sec.~\ref{sec:GADqueue} we discuss the queue-channel capacity of the symmetric
GADC.  We offer useful design insights by analyzing and numerically
plotting~(see Fig.~\ref{fig:LambdaVsCap}) the capacity expression.
Sec.~\ref{sec:cncl} contains a brief discussion and outlines potentially
interesting future directions.

\section{Preliminaries}
\label{SPrelim}

\subsection{Classical and Quantum Channels}

A random variable, $X$, taking discrete value $x$ from a finite set $\XC$ with
$p(x) := \Pr(X=x)$ has Shannon entropy $H(X)= -\sum_{x \in \XC} p(x) \log_2
p(x)$. A discrete memoryless channel $N$ taking $x \in \XC$ to $y \in \YC$ with
conditional probability $p(y|x):= \Pr(Y=y|X=x)$ has channel capacity
\begin{equation}
    C_{\text{Shan}}(N) = \max_{p(x)} I(X;Y),
    \label{ch1_ChanCap2}
\end{equation}
where $I(X;Y) := H(X) + H(Y) - H(X,Y)$ is the mutual information between input
$X$ and output $Y$.
A binary symmetric channel~(BSC) with flip probability $q$, is defined by the
conditional probability distribution $p(0|0) = 1- q, p(1|0) = q, p(0|1) = q,$
$p(1|1) = 1-q$, where $0 \leq q \leq 1$; it has capacity $1 - h(q)$, where
$h(q) = -[q \log_2 q + (1-q) \log_2 (1-q) ]$ is the binary entropy function. A
binary asymmetric channel~(BAC), defined by the conditional probability
distribution $p(0|0) = 1- q, p(1|0) = q, p(0|1) = r,$ and $p(1|1) = 1-r$, has
capacity \textcolor{black}{
\begin{equation}
    C_{\text{Shan}}\big (\text{BAC}(q,r) \big) = \frac{1}{1-s-t}(s h(t) - (1-t)h(s)) + \log_2 (1 + 2^{\frac{h(s) - h(t)}{1-s-t}}),
    \label{eq:BACCap}
\end{equation}
} where $s = \min(q,r)$ and $t = \max(q,r)$.

Let $\HC$ denote a finite dimensional Hilbert space, $\LC(\HC)$ denote the
space of bounded linear operators on $\HC$. A density operator $\rho$ is a
positive semi-definite operator in $\LC(\HC)$ with unit trace, $\Tr(\rho) = 1$.
A classical quantum~(c-q) channel, $\EC: \XC \mapsto \LC(\HC)$ maps a symbol $x
\in \XC$ to a density operator $\rho(x) \in \LC(\HC)$. 
Measuring $\rho(x)$ and recording the measurement outcome $y \in \LC(\YC)$
can be represented by a map $\DC : \LC(\HC) \mapsto \YC$.
This measurement can be described using a POVM, a collection of positive
operators in $\LC(\HC)$ that sum to the identity. Suppose the POVM
$\{\Lm(y)\}$ specifies $\DC$; then any input $x \in \XC$ is decoded as $y$ with
conditional probability,
\begin{equation}
    p(y|x) = \Pr(Y = y | X = x) = \Tr\big( \Lm(y) \rho(x) \big).
    \label{cond1}
\end{equation}
This conditional probability defines an {\em induced channel} $N : \XC \mapsto
\YC$ with capacity $C_{\text{Shan}}(N)$~(see Ch.20 in~\cite{Wilde19}). For a
fixed $\EC$, maximizing this capacity over choice of decodings $\DC$ defines
the Shannon capacity of $\EC$,
\begin{equation}
    C_{\text{Shan}}(\EC) = 
    \max_{\DC} C_{\text{Shan}}(N) = 
    \max_{\DC, p(x)} I (Y;X).
\end{equation}
For a fixed $p(x)$ and output alphabet $\YC$, $I(X;Y)$ is convex in $p(y|x)$,
and $p(y|x)$ is linear in the decoding POVM $\{\Lm(y)\}$ specifying $\DC$.
The resulting convexity of $I(X;Y)$ in $\DC$, for fixed $\YC$, is one reason
due to which the capacity $C_{\text{Shan}}(\EC)$ is non-trivial to compute.
A larger capacity can be obtained using a fixed product encoding $\EC: \XC \mapsto \LC(\HC)$
by allowing for decoding $\DC_k : \LC(\HC^{\ot k}) \mapsto \YC^{\ot k}$ that
may jointly measure $k$ encoded states. Such decoding $\DC_k$ defines
an induced channel $N_k$.
Maximizing the channel mutual information $\IC(N_k)$ over all decodings $\DC_k$
defines $\IC^{(k)}(\EC)$.
Due to the presence of entanglement in the joint decoding measurements, one may
have $\IC^{(k)}(\EC) \geq k\IC^{(1)}(\EC)$. Due to this type of
{\em super-additivity}, a proper definition of the capacity $C_{pj}$ of sending
classical information using {\em p}roduct encoding $\EC$ and {\em j}oint
decoding is given by a {\em multi-letter} formula,
\begin{equation}
    C_{pj} = \lim_{k \mapsto \infty} \frac{1}{n}\IC^{(k)}(\EC).
    \label{HolevoChiOp}
\end{equation}
Remarkably, the Holevo-Schumacher-Westmoreland theorem~\cite{Holevo98,
SchumacherWestmoreland97} gives the above multi-letter expression a {\em
single-letter} form; that is,
\begin{equation}
    C_{pj}(\EC) = \chi^{(1)}(\EC) := \max_{\{p(x)\}} \chi \big( p(x), \rho(x) \big),
    \label{HolevoChi1}
\end{equation}
where the Holevo quantity,
\begin{equation}
    \chi\big( p(x), \rho(x) \big) = S\big(\sum_{x \in \XC} p(x) \rho(x) \big) 
    - \sum_{x} p(x) S\big(\rho(x))\big),
    \label{HolevoChi}
\end{equation}
and $S(\rho) = -\Tr(\rho \log \rho)$, is the von-Neumann entropy of a density
operator $\rho$. Due to the close connection between
$C_{pj}$ and $\chi$, sometimes $C_{pj}(\EC)$ is also denoted by
$C_{\chi}(\EC)$.  There are cases where $C_{\chi}(\EC)$ is strictly greater
than $C_{\text{Shan}}(\EC)$~\cite{SasakiBarnettEA99, Shor04}.  However, much
remains unknown about when and how such separations \textcolor{black}{occur}. 

\subsection{Classical Capacities of a Quantum Channel}
\label{sec:CCap}

A quantum channel $\BC : \LC(\HC_a) \mapsto \LC(\HC_b)$ is a completely
positive trace preserving~(CPTP) map.
The Shannon capacity of $\BC$,
\begin{equation}
    C_{Shan}(\BC) = \max_{\EC, \DC} \IC(N) = \max_{ \{\EC, \DC \} } \max_{p(x)} I(X;Y),
    \label{ShannonCap}
\end{equation}
where $N:X \mapsto Y$ is an induced channel obtained by using product encoding
$\EC: \XC \mapsto \LC(\HC_a)$ and product decoding $\DC: \LC(\HC_b) \mapsto
\YC$.
\textcolor{black}{Allowing for joint encodings while restricting the decoder to product decodings does not
increase the channel's ability to send classical information beyond the channel's Shannon capacity, i.e., $C_{jp}(\BC) =
C_{Shan}(\BC)$~\cite{KingRuskai_2001}}.
The Shannon capacity is bounded from above by $C_{pj}(\BC)$, sometimes called
the Holevo capacity or the product state capacity of $\BC$. A multi-letter
expression of the form~\eqref{HolevoChiOp} for $C_{pj}(\BC)$ can be shown to
equal to a single-letter formula,
\begin{equation}
    C_{pj}(\BC)= \chi^{(1)}(\BC) := \max_{\{\rho_a(x),p(x)\}} \chi \big(p(x), \rho_b(x) \big),
    \label{HolevoBChi}
\end{equation}
where $\rho_b(x) = \BC(\rho_a(x))$.
The most general capacity of a channel $C_{jj}(\BC)$, sometimes called the {\em
classical capacity} of $\BC$ allows for joint encoding $\EC_k:\XC^{\times k}
\mapsto \LC(\HC_a^{\ot k})$ and decoding $\DC_k: \LC(\HC^{\ot k}_b) \mapsto
\YC^{\times k}$. This encoding-decoding results in an induced channel $\tilde
N_k: X^{\times k} \mapsto Y^{\times k}$.  Maximizing the mutual information of
this induced channel over all $\EC_k$ and $\DC_k$ gives $\tilde \IC^{k}(\BC)$.
This quantity can be super-additive, as a result $C_{jj}(\BC)$ is defined by a
multi-letter expression of the form~\eqref{HolevoChiOp}.  Using the product
state capacity $\chi^{(1)}(\BC)$~\eqref{HolevoBChi}, $C_{jj}(\BC)$ can be
written as follows,
\begin{equation}
   C_{jj}(\BC) = \underset{k \mapsto \infty}{\lim} \frac{1}{k} \chi^{(1)}(\BC^{\ot k}) := \chi(\BC).
   \label{ch1_HolCap}  
\end{equation}
In general, the limit in~\eqref{ch1_HolCap} is required because the product
state capacity can be non-additive~\cite{Hastings09}; that is, for any two
quantum channels $\BC$ and $\BC'$, the inequality,
\begin{equation}
   \chi^{(1)}(\BC \ot \BC') \geq  \chi^{(1)}(\BC) + \chi^{(1)}(\BC'), 
   \label{ch1_HolNon}  
\end{equation}
can be strict. For certain special classes of channels, the Holevo information
is known to be additive; that is, the inequality above becomes an equality when
$\BC'$ is any channel and $\BC$ belongs to a special class of channels that
includes unital qubit channels~\cite{King02}, depolarizing
channels~\cite{King03}, Hadamard channels~\cite{KingMatsumotoEA07}, and
entanglement breaking channels~\cite{Shor02}.

\subsection{Classical capacity of non-i.i.d. quantum channels}
Much of the focus in quantum shannon theory is on quantum channels that are
{\em independent and identically distributed~(i.i.d.)} across multiple uses.
As mentioned in Sec.~\ref{sec:intro}, the effective channel seen by qubits in
the presence of decoherence  in the transmission buffer is non-i.i.d.
Characterizing the capacity is a harder problem in such a setting. In the
classical setting, a capacity formula for this general non-i.i.d. setting was
obtained using the information-spectrum method~\cite{Han03, VerduHan94}. This
technique was adapted to the quantum setting in~\cite{HayashiNagaoka03}, and a
general capacity formula was obtained for the classical capacity of a quantum
channel. 

\subsubsection{The Quantum inf-information rate}
Recall that a quantum channel is defined as a completely positive,
trace-preserving map $\BC : \HC_a \mapsto \HC_b$ from the "input" Hilbert space
$\HC_a$ to the "output" Hilbert space $\HC_b$. Consider a sequence of quantum
channels $\vec{\mathcal{N}} \equiv \{\mathcal{N}^{(n)}\}_{n=1}^{\infty}$. Let
$\vec{P}$ denote the totality of sequences $\{P^n( X^n ) \}_{n=1}^{\infty}$ of
probability distributions (with finite support) over input sequences $X^n$, and
$\vec{\rho}$ denote the sequences of states $\rho_{X^n}$ corresponding to the
encoding $X^n \rightarrow \rho_{X^n}$. For any $a \in \mathbb{R}^{+}$ and $n$,
we define the operator, \[ \mathcal{O}_{\{P^{n}(X^{n}), \rho_{X^{n}}\}}(a) =
\mathcal{N}^{(n)} (\rho_{X^{n}})- e^{an}\sum_{X^{n}\in
\mathcal{X}^{(n)}}P^{n}(X^{n})\mathcal{N}^{(n)}(\rho_{X^{n}}) . \] Further, let
$\{  \mathcal{O}_{\{P^{n}(X^{n}), \rho_{X^{n}}\}}(a)  > 0\}$ denote the projector
onto the positive eigenspace of the operator $\mathcal{O}_{\{P^{n}(X^{n}),
\rho_{X^{n}}\}}(a) $. 

\begin{definition}
The quantum inf-information rate~\cite{HayashiNagaoka03}
    $\underline{\mathbf{I}}( \, \{ \vec{P}, \vec{\rho} \, \}, \vec{\mathcal{N}}
    \, )$  is defined as,
\begin{equation}
	\underline{\mathbf{I}}( \, \{ \vec{P}, \vec{\rho} \, \}, \vec{\mathcal{N}} \, ) = 
	\sup \left\lbrace a\in\mathbb R^+\left\vert \lim_{n\rightarrow \infty} \sum_{X^{n}\in \mathcal{X}^{(n)}} P^{n}(X^{n}) 
    \tr \left[ \mathcal{N}^{(n)} (\rho_{X^{n}}) \left\lbrace \mathcal{O}_{\{P^{n}(X^{n}), \rho_{X^{n}}\}}(a)   > 0 \right.\right\rbrace \right] = 1 \right\rbrace .
    \label{eq:quantum_I}
	\end{equation}
\end{definition}

This is the quantum analogue of the classical inf-information rate originally
defined in~\cite{Han03, VerduHan94}. The central result
of~\cite{HayashiNagaoka03} is to show that the classical capacity of the
channel sequence $\vec{\mathcal{N}}$ is given by \[C = \sup_{\{\vec{P},
\vec{\rho}\}}\underline{\mathbf{I}} (\{\vec{P},\vec{\rho}\},\vec{\mathcal{N}}).
\]
 
\section{Unital Qubit Queue-channels}
\label{UnitalQueue}

A unital qubit channel $\Phi$ satisfies $\Phi(I) = I$ where $I$ is the $2
\times 2$ identity operator. By itself, the channel describes i.i.d. noise.
The capacity of sending classical information in this i.i.d. setting was
discussed in Sec.~\ref{sec:CCap}, where we mentioned that the product state
classical capacity of $\Phi$ is additive and thus the channel's capacity,
$\chi(\Phi)$, can be achieved using product encoding.

{\color{black} A unital qubit queue-channel models the total decoherence experienced by the qubits while  waiting in the buffer for transmission and passing through the channel. Each qubit experiences a (potentially)
different unital qubit channel $\Phi_W$ parametrized by the random time $W$ that it
spends in the buffer.} In this case, for the transmitted
state $\rho_{12\ldots k}$, the output state would be $(\Phi_{W_1} \otimes
\Phi_{W_2} \cdots \Phi_{W_k})~(\rho_{12\ldots k})$, if the waiting times
$W^k=(W_1, W_2, \ldots, W_k)$ are known at the receiver. 
Examples of unital qubit queue-channels include the depolarising
queue-channels~\cite{MandayamJagannathanEA20} and the symmetric generalized
amplitude damping queue-channel~(see Sec.~\ref{CapUnQue}).

The buffering process is modeled as a continuous-time single-server queue. To
be specific, the single-server queue is characterised by (i) A server that
processes the qubits in the order in which they arrive, that is in a First Come
First Served (FCFS) fashion\footnote{The FCFS assumption is not required for
our results to hold, but it helps the exposition.}, and (ii) An "unlimited
buffer" --- that is, there is no limit on the number of qubits that can  wait
to be transmitted. We denote the time between preparation of the $i$th and
$i+1$th qubits by $A_i$, where $A_i$ are i.i.d. random variables. These $A_i$s
are viewed as inter-arrival times of a point process of rate $\lambda,$ where
$\mathbb E[A_i]=1/\lambda.$ The "service time," or the time taken to transmit
qubit $i,$ is denoted by $S_i$, where $\{S_i\}$ are also assumed to be i.i.d.
random variables, independent of the inter-arrival times $A_i,i\geq 1$. The
"service rate" of the qubits is denoted by $\mu=1/\mathbb E[S_i].$ We assume
that $\lambda<\mu$ (i.e., mean transmission time is strictly less than the mean
preparation time) to ensure stability of the queue. Qubit $1$ has a waiting
time $W_1=S_1$. The waiting times of the other qubits can be obtained using the
well known Lindley's recursion: \[W_{i+1} = \max(W_i - A_i, 0) + S_{i+1}.\] In
queuing parlance, the above system describes a continuous-time $G/G/1$ queue.
Under mild conditions, the sequence $\{W_i\}$ for a stable $G/G/1$ queue is
\emph{ergodic,} and reaches a \emph{stationary distribution} $\pi.$ We assume
that the waiting times $\{W_i\}$ of the qubits are {\em available} at the
receiver during decoding. 

An important difference between the queue-channel introduced above and the
usual i.i.d. channels is that this channel is a part of continuous time
dynamics. Hence, the usual notion of capacity per {\em channel use} for i.i.d.
channels is not pertinent here.  As mentioned before, the above channel model
is closely related to quantum queue-channels studied in
\cite{MandayamJagannathanEA20}. So, we first do a short review of the notion of
capacity per {\em unit time} and some relevant capacity results in
\cite{MandayamJagannathanEA20}.

\subsection{Classical capacity of unital quantum queue-channels}
\label{unitalCCap}

\begin{definition}
\label{def:contAchievableRate}
A rate $R$ is called an achievable rate for a quantum queue-channel if there
exists a sequence of $(n, 2^{R T_n})$ quantum codes with probability of error
$P_e^{(n)} \to 0$ as $n\to \infty$ and $\mathbf{E}\left[\sum_{i=1}^{n-1}
    A_i+W_n\right] \le T_n$.
\end{definition}

\begin{definition}
\label{def:capacity}
The information capacity of the queue-channel is the supremum of all achievable
    rates for a given arrival and service process, and is measured in bits per
    unit time.
\end{definition}
Note that the information capacity of the queue-channel depends on the arrival
process, the service process, and the noise model.

As discussed in Sec.~\ref{sec:intro}, in this paper, we derive the capacity of
this channel and show that product encoding and product decoding achieve that
capacity. Towards this, an important intermediate step of (possibly)
independent interest is to design an explicit product encoding and product
decoding strategy for i.i.d. unital qubit channels. 

\subsection{Product Encoding/Decoding for i.i.d. \textcolor{black}{Unital} Qubit Channels}
\label{sec:inducedChannel}
It is well known that product encoding achieves the classical capacity of an
i.i.d. unital qubit channel, which is equal to the Holevo information
\cite{King02}. In this section, we show that product  decoding is sufficient to
achieve that capacity and provide an explicit capacity achieving product
encoding and decoding strategy. To the best of our knowledge, this explicit
result is not available in the current literature.

The classical capacity of an i.i.d. unital qubit channel \textcolor{black}{$\Phi$} is given by the Holevo
information \cite{King02}
\begin{equation}
    \chi(\Phi) = \sup_{p, \rho, \rho'} \left(S(\Phi(p\rho+(1-p)\rho')) - p S(\Phi(\rho)) - (1-p) S(\Phi(\rho')) \right). 
    \label{eq:HolevoUntial}
\end{equation}
\subsubsection{Product encoding and decoding}
\label{sec:productEncDec}
For a unital qubit channel $\Phi$, let 
\begin{equation}
    M_{\Phi}=\sup_{\rho} ||\Phi(\rho)||, \label{eq:MPhi}
\end{equation}
where $||\cdot||$ is the operator norm~(it equals the largest eigenvalue of
a density operator). 

{\color{black} We define $\mathcal{R}_{\Phi}$ to be the set  of states that achieves the supremum in
\eqref{eq:MPhi}, i.e., for any $\rho \in \mathcal{R}_{\Phi}$,  $M_{\Phi}= ||\Phi(\rho)||$.

For any state $\rho \in \mathcal{R}_{\Phi}$, we define $\Gamma_{\Phi,\rho}$ to be the set of states such that for any $\tau \in \Gamma_{\Phi,\rho}$ 
\begin{equation}
    M_{\Phi}=\text{Tr}(\Phi(\rho) \tau).
    \label{eq:tauStar}
\end{equation}

{\bf Message to classical bits:} Consider the classical binary symmetric channel (BSC) with cross-over
probability $M_{\Phi}$ and choose any capacity achieving encoder and decoder.
For example, one can choose the well known random coding and typical decoding,
or an  appropriate polar code and the corresponding decoder.

{\bf Encoding classical bits to quantum states:} For sending a message over the unital channel, first map the message to an
appropriate classical binary codeword from the chosen classical codebook. Then
map symbol $0$ to a state $\rho^* \in \mathcal{R}_{\Phi}$ and symbol $1$ to $I -\rho^*$, and transmit
over the unital channel.

{\bf Decoding quantum states:} At the receiver, use projection measurements $\{P_0=I-\tau^*, P_1=\tau^*\}$, where $\tau^* \in \Gamma_{\Phi,\rho^*}$, 
and obtain a sequence of $0$ and $1$. Then, use the classical decoder chosen
for the BSC$(M_{\Phi})$.
 }

\begin{theorem}
\label{thm:unitalPOVM}
The above product encoding and decoding strategy for the unital qubit channel
    achieves the capacity in \eqref{eq:HolevoUntial}.
\end{theorem}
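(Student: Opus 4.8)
The plan is to reduce the quantum coding problem to a classical one: I will show that the proposed product scheme induces exactly a classical binary symmetric channel whose Shannon capacity equals the Holevo information $\chi(\Phi)$, and then invoke the classical channel coding theorem. The first and most substantial step is to pin down the value of $\chi(\Phi)$. For any qubit state $\sigma$, the output $\Phi(\sigma)$ has largest eigenvalue $\|\Phi(\sigma)\| \le M_{\Phi}$ by definition of $M_{\Phi}$ in \eqref{eq:MPhi}; since a qubit density operator with top eigenvalue $\mu \in [1/2,1]$ has von Neumann entropy $h(\mu)$, and $h$ is decreasing on $[1/2,1]$, this gives $S(\Phi(\sigma)) = h(\|\Phi(\sigma)\|) \ge h(M_{\Phi})$ for every $\sigma$. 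Feeding this, together with the trivial bound $S(\Phi(\,\cdot\,)) \le 1$ on the first term, into the two-state formula \eqref{eq:HolevoUntial} yields the upper bound $\chi(\Phi) \le 1 - h(M_{\Phi})$.

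For the matching lower bound I would evaluate \eqref{eq:HolevoUntial} on the antipodal ensemble $\{(1/2,\rho^*),(1/2,I-\rho^*)\}$ with $\rho^* \in \mathcal{R}_{\Phi}$. Unitality is the crucial ingredient: the average input is $\tfrac12(\rho^* + (I-\rho^*)) = \tfrac12 I$, so $\Phi(\tfrac12 I) = \tfrac12 I$ and the first term equals $S(\tfrac12 I)=1$. Moreover $\Phi(I-\rho^*) = I - \Phi(\rho^*)$, whose spectrum is that of $\Phi(\rho^*)$ with the two eigenvalues swapped, so $\|\Phi(I-\rho^*)\| = M_{\Phi}$ and both conditional entropies equal $h(M_{\Phi})$. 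Hence $\chi(\Phi) \ge 1 - h(M_{\Phi})$, and combined with the upper bound we obtain $\chi(\Phi) = 1 - h(M_{\Phi})$.

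Next I would compute the induced channel $N$ of \eqref{cond1} for the stated encoding $0 \mapsto \rho^*$, $1 \mapsto I-\rho^*$ and decoding POVM $\{P_0 = I-\tau^*,\, P_1 = \tau^*\}$ with $\tau^* \in \Gamma_{\Phi,\rho^*}$. By the defining property \eqref{eq:tauStar}, $\Tr(\tau^*\Phi(\rho^*)) = M_{\Phi}$, so $p(1|0) = M_{\Phi}$ and $p(0|0) = 1-M_{\Phi}$. Using $\Phi(I-\rho^*) = I - \Phi(\rho^*)$ and $\Tr\tau^* = 1$, a short expansion gives $p(0|1) = \Tr\big((I-\tau^*)(I-\Phi(\rho^*))\big) = 2 - 1 - 1 + M_{\Phi} = M_{\Phi}$ and $p(1|1) = 1-M_{\Phi}$. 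Thus $N$ is exactly a binary symmetric channel with crossover probability $M_{\Phi}$, whose Shannon capacity is $1 - h(M_{\Phi})$.

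Finally I would assemble the coding argument. Since the channel is i.i.d. and both encoding and decoding act independently on each qubit, $n$ uses of the quantum scheme are equivalent to $n$ i.i.d. uses of $\mathrm{BSC}(M_{\Phi})$; a capacity-achieving classical code for this BSC therefore transmits at any rate below $1 - h(M_{\Phi}) = \chi(\Phi)$ with error probability tending to $0$, which by the additivity of the Holevo information for unital qubit channels \cite{King02} is the classical capacity. The main obstacle is the identification $\chi(\Phi) = 1 - h(M_{\Phi})$ — in particular verifying, via unitality, that the antipodal ensemble simultaneously saturates the output-entropy bound and minimizes the average conditional entropy — together with confirming that taking $\tau^*$ to be the top eigenprojector of $\Phi(\rho^*)$ (the content of $\Gamma_{\Phi,\rho^*}$) makes the induced channel perfectly symmetric; the remaining steps are then routine.
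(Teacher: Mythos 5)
Your proposal is correct, and its core reduction is exactly the paper's: the same algebra (unitality in the form $\Phi(I-\rho^*) = I - \Phi(\rho^*)$, plus $\Tr(\tau^*\Phi(\rho^*)) = M_{\Phi}$ from \eqref{eq:tauStar}) showing that the product scheme induces an i.i.d.\ BSC with crossover probability $M_{\Phi}$, followed by the classical coding theorem. Where you genuinely diverge is on the key identity $\chi(\Phi) = 1 - h(M_{\Phi})$: the paper simply cites \cite{King02} for this, whereas you derive it from \eqref{eq:HolevoUntial} in a self-contained way --- the upper bound via $S(\Phi(\sigma)) = h\bigl(\|\Phi(\sigma)\|\bigr) \ge h(M_{\Phi})$ (valid because $1/2 \le \|\Phi(\sigma)\| \le M_{\Phi}$ and $h$ is decreasing on $[1/2,1]$) together with the trivial bound $S \le 1$ on the first term, and the matching lower bound from the antipodal ensemble $\{(1/2,\rho^*),(1/2,I-\rho^*)\}$, where unitality fixes the average output at $I/2$ and the mirrored spectrum of $I-\Phi(\rho^*)$ makes both conditional entropies equal $h(M_{\Phi})$. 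This buys a proof of Theorem~\ref{thm:unitalPOVM} that is self-contained relative to the formula \eqref{eq:HolevoUntial} as stated, and it makes transparent precisely where unitality enters; what it does not (and need not) replace is King's theorem itself, which is still required to identify \eqref{eq:HolevoUntial} with the full, additive classical capacity of the channel --- an identification outside the scope of the theorem statement. One cosmetic quirk you inherited from the paper: with the labeling $P_1=\tau^*$, the crossover probability is $M_{\Phi} \ge 1/2$, so the induced channel is a ``flipped'' BSC; this is harmless for capacity since $h(q)=h(1-q)$, but swapping $P_0$ and $P_1$ would give the conventional crossover $1-M_{\Phi}$.
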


\begin{proof}
    First, we prove that the above encoding and decoding across an i.i.d.
    unital qubit channel results in a classical i.i.d. BSC~($M_{\Phi}$). The
    rest follows using the fact that $\chi(\Phi)=1-h(M_{\Phi})$ \cite{King02}
    and $1-h(M_{\Phi})$ is the Shannon capacity of BSC$(M_{\Phi})$.

    The probability that bit $0$ is decoded as bit $1$ is equal to the
    probability that the projective measurement $\{P_0,P_1\}$ on $\Phi(\rho^*)$
    gives $1$. Similarly, the probability that bit $1$ is decoded as bit $0$ is
    same as the probability of the event that the projective measurement on
    $\Phi(I-\rho^*)$ gives $0$. 
    The second probability is given by
    \begin{align}
        & \ \ \text{Tr}((I-\tau^*) \Phi(I-\rho*) )\nonumber \\
        & = \text{Tr}((I-\tau^*)(I-\Phi(\rho^*))) \ \ \ \text{  (unital channel)} \nonumber \\
        & = \text{Tr}(I-\tau^*-\Phi(\rho^*)+\tau^* \Phi(\rho^*)) \nonumber \\
        & = \text{Tr}(\tau^* \Phi(\rho*)). \nonumber
    \end{align}
    This expression, however, is exactly equal to first probability, which in turn
    is given by
    \begin{align}
        & \ \ \text{Tr}(\tau^* \Phi(\rho*)) \nonumber \\
        & = \sup_{\tau: \text{pure state}} \text{Tr}(\Phi(\rho^*) \tau) \nonumber \\
        & = ||\Phi(\rho^*)|| \ \ \ \text{ (by the defn. of operator norm) } \nonumber \\
        & = M_{\Phi}. \nonumber
    \end{align}
    This completes the proof.
\end{proof}
The main insight from the above theorem is summarized in the following remark.
\begin{remark}
Every unital qubit channel has an induced binary symmetric channel whose
    Shannon capacity equals the classical capacity of the unital qubit channel.
\end{remark}

Next, building on the above insight and Theorem~4 in
\cite{MandayamJagannathanEA20}, we study unital qubit queue-channels.

\subsection{Capacity of Unital Qubit Queue-channels}
\label{CapUnQue}
We start with the capacity upper-bound in~\cite{MandayamJagannathanEA20}, which
is applicable to any additive queue-channel. We assume that the waiting times
of the qubits are {\em available} at the receiver during decoding.

\begin{theorem}[\cite{MandayamJagannathanEA20}, Theorem~1]
\label{thm:capUB}
    The classical capacity of a unital qubit queue-channel is upper-bounded by {\color{black}
    $\lambda~\EX_{\pi} \chi(\Phi_W)$, irrespective of whether the
    encoder knows the waiting times or does not know the waiting times. Here, $\mathbf{E}_{\pi}$ is expectation with respect to the stationary  distribution $\pi$ of $\{W_i\}$.}
\end{theorem}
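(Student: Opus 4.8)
The plan is to reduce the continuous-time, per-unit-time capacity to a per-channel-use quantity, bound that quantity by a time-average of Holevo informations, and then invoke ergodicity of the waiting-time process. Because the channel is embedded in a stable $G/G/1$ queue, the normalization in Definition~\ref{def:contAchievableRate} requires $T_n \ge \EX[\sum_{i=1}^{n-1}A_i + W_n]$, and since $\EX[A_i]=1/\lambda$ and the stationary waiting time has finite mean, $T_n/n \to 1/\lambda$ as $n\to\infty$. Thus any rate $R$ (in bits per unit time) achievable with $n$-qubit blocks corresponds to a per-use rate $R\,T_n/n \to R/\lambda$, and it suffices to show that the per-use rate is bounded by $\EX_\pi[\chi(\Phi_W)]$.

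First I would invoke the information-spectrum converse of \cite{HayashiNagaoka03}, which identifies the per-use classical capacity of the channel sequence $\vec{\mathcal N}=\{\Phi_{W_1}\otimes\cdots\otimes\Phi_{W_n}\}_n$ with $\sup_{\{\vec P,\vec\rho\}}\underline{\mathbf{I}}(\{\vec P,\vec\rho\},\vec{\mathcal N})$. For every fixed ensemble sequence the inf-information rate is upper-bounded by the (liminf of the) normalized Holevo information of the block channel, so it suffices to control $\frac1n\,\chi(\Phi_{W_1}\otimes\cdots\otimes\Phi_{W_n})$. Conditioned on a realization of the waiting times $W^n=(W_1,\dots,W_n)$, this block channel is a tensor product of unital qubit channels, whence its Holevo information is additive \cite{King02}:
\[
    \chi\!\left(\Phi_{W_1}\otimes\cdots\otimes\Phi_{W_n}\right)=\sum_{i=1}^{n}\chi(\Phi_{W_i}).
\]
Crucially, this additivity holds for arbitrary, possibly entangled, input ensembles, so the resulting per-use bound $\frac1n\sum_{i=1}^n\chi(\Phi_{W_i})$ is valid regardless of whether the encoder exploits knowledge of $W^n$; this is precisely what renders the upper bound insensitive to encoder side information.

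Next I would average over the waiting times. Under the stated stability ($\lambda<\mu$) and mild regularity assumptions, the sequence $\{W_i\}$ generated by Lindley's recursion is ergodic with stationary law $\pi$. Applying the ergodic theorem, and noting that $\chi(\Phi_w)\in[0,1]$ is a bounded measurable function of $w$, gives
\[
    \frac1n\sum_{i=1}^{n}\chi(\Phi_{W_i})\xrightarrow{\text{a.s.}}\EX_\pi[\chi(\Phi_W)].
\]
Combining this with the per-use converse bounds the per-use capacity by $\EX_\pi[\chi(\Phi_W)]$, and multiplying by the throughput conversion factor $\lambda$ from the first step yields the claimed bound $\lambda\,\EX_\pi[\chi(\Phi_W)]$.

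I expect the main obstacle to be making the information-spectrum step fully rigorous for this correlated, non-i.i.d.\ sequence: one must show that the spectral-projector quantity defining $\underline{\mathbf{I}}$ collapses onto the ergodic average rather than a larger $\limsup$, which requires controlling the fluctuations of $\frac1n\sum_i\chi(\Phi_{W_i})$ about its mean (again supplied by ergodicity) and carefully interchanging the expectation over $W^n$ with the spectral projection. A secondary technical point is justifying $T_n/n\to 1/\lambda$, for which the stationary waiting time must have finite mean; this follows from $\lambda<\mu$ together with standard $G/G/1$ moment conditions.
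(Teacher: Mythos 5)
Your proposal is correct in substance, but it takes a different route from the paper for a somewhat unusual reason: the paper does not actually re-derive this bound. Its entire proof is a citation argument --- the case where the encoder does not know the waiting times is declared a direct restatement of Theorem~1 of \cite{MandayamJagannathanEA20}, and the side-information case is dispatched by observing that the derivation in that reference already allowed the encoder access to the waiting times. What you have written is, in effect, a reconstruction of the proof inside the cited reference: the Hayashi--Nagaoka information-spectrum converse, the bound on the quantum inf-information rate by the normalized Holevo information of the block channel, King's additivity for (tensor products of) unital qubit channels, ergodicity of the stationary waiting-time sequence to convert the Ces\`aro average $\frac{1}{n}\sum_i \chi(\Phi_{W_i})$ into $\EX_\pi[\chi(\Phi_W)]$, and the throughput factor $\lambda$ from the constraint $T_n \ge \EX[\sum_{i=1}^{n-1}A_i + W_n]$. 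Notably, your key observation --- that additivity holds over arbitrary, possibly entangled and waiting-time-adapted input ensembles, which is precisely why the bound is insensitive to encoder side information --- is the same one-sentence justification the paper gives for the side-information case. What each approach buys: the paper's citation is economical and appropriate given that the result is imported; your derivation makes the logical dependencies explicit (where unitality/additivity enters, where stability and ergodicity are used, why $\lambda$ appears) and correctly flags the two genuine technical burdens, namely the lemma bounding $\underline{\mathbf{I}}$ by the liminf of normalized Holevo quantities for a correlated random channel sequence with receiver-side waiting-time information, and the queue-theoretic justification of $T_n/n \to 1/\lambda$ (for the converse, only the lower bound $T_n \gtrsim n/\lambda$ is needed, so your appeal to exact convergence is slightly stronger than necessary but harmless). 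These are exactly the points proved in \cite{MandayamJagannathanEA20}, so your sketch has no gap that the paper itself fills differently.
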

\begin{proof}
    The case where waiting times are not known at the encoder is a direct
    re-statement from \cite{MandayamJagannathanEA20}. The other case 
    follows by noting the fact that in
    deriving the upper-bound in \cite{MandayamJagannathanEA20}, the encoder was
    allowed access to the additional side information regarding the waiting times.
\end{proof}

We study encoding and decoding strategies that achieve the above bound in both
settings. We start with the  simpler setting where encoder knows the waiting
times and later we study the more practical setting, where the encoder does not
know the waiting times.

\subsubsection{Encoder knows waiting times}
\label{sec:TxknowsW}
Knowledge of the future parameters of a time-varying channel at the receiver is
called non-causal side information.  This is not practical when the channel
variation is fast and unpredictable (i.i.d. like). However, as the waiting
times result into a Markov process, such an assumption is not so impractical.
In certain slowly varying queues, the waiting times can be predicted within a
reasonable accuracy. 
In this setting, the product encoding and product decoding strategy is similar
to the one considered in Sec.~\ref{sec:inducedChannel}. However, some
modifications are necessary to address the non-i.i.d. nature of the
queue-channel.

First, we introduce a modified version of \eqref{eq:HolevoUntial}. For an
unital qubit channel parametrized by waiting time $W$, let
{\color{black}
\begin{align}
& M_{\Phi_W}= \sup_{\rho} ||\Phi_W(\rho)||, \nonumber \\
   & \mathcal{R}_{\Phi_W}= \{\rho: ||\Phi_W(\rho)|| = M_{\Phi_W}\}, \nonumber \\
   & \Gamma_{\Phi_W,\rho} = \{\tau: \text{Tr}(\tau \Phi(\rho))= M_{\Phi_W}\} \text{ for } \rho \in \mathcal{R}_{\Phi_W}. \label{eq:OptEncDecW}
\end{align}
}

\noindent{\bf Message to classical bits:} Pick any capacity achieving encoder
and decoder for the classical binary symmetric queue-channel
$\{\text{BSC}(M_{\Phi_{W_i}})\}$. A detailed discussion on this channel can be
found in~\cite{MandayamJagannathanEA20}.

\noindent {\bf Product encoding and decoding of qubits:} {\color{black} The encoder and the decoder agree  a priori on a choice of $\rho_W \in \mathcal{R}_{\Phi_W}$ for all $W \ge 0$. The encoder maps the $i$th
classical bit to $\rho_{W_i}$ or $I-\rho_{W_i}$, depending on whether it is
$0$ or $1$, respectively. For the $i$th quantum state at the output of the channel, the decoder uses the projective measurement
$\{I-\tau_{W_i}, \tau_{W_i}\}$, where $\tau_{W_i} \in \Gamma_{\Phi_{W_i},\rho_{W_i}}$.}

\begin{theorem}
\label{thm:TxknowsW}
The above product encoding and product decoding strategy for unital qubit
    queue-channel achieves the capacity upper-bound in Theorem~\ref{thm:capUB}.
\end{theorem}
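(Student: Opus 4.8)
The plan is to reduce the quantum queue-channel, under this product encoding and decoding, to a purely classical binary symmetric queue-channel, and then invoke the classical capacity result already available in \cite{MandayamJagannathanEA20}. The key structural observation is that the scheme is entirely ``pointwise in the waiting time'': the encoder maps the $i$th bit to a state drawn from $\mathcal{R}_{\Phi_{W_i}}$ and the decoder measures the $i$th output with a projective measurement built from $\Gamma_{\Phi_{W_i},\rho_{W_i}}$, so the quantum layer factorizes across qubits once the waiting-time sequence is fixed.

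First I would condition on the waiting-time sequence $W^k=(W_1,\ldots,W_k)$. Because the encoding is a product state, the channel acts as the tensor product $\Phi_{W_1}\otimes\cdots\otimes\Phi_{W_k}$, and the decoding is a product of single-qubit projective measurements, so the measurement outcomes are conditionally independent across qubits given $W^k$. For each individual qubit the computation is verbatim the one in the proof of Theorem~\ref{thm:unitalPOVM}, now with $\Phi$ replaced by $\Phi_{W_i}$, $\rho^*$ by $\rho_{W_i}\in\mathcal{R}_{\Phi_{W_i}}$, and $\tau^*$ by $\tau_{W_i}\in\Gamma_{\Phi_{W_i},\rho_{W_i}}$. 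Unitality then forces the probability of a $0\to1$ flip and the probability of a $1\to0$ flip to be equal, both given by $\text{Tr}(\tau_{W_i}\Phi_{W_i}(\rho_{W_i}))=M_{\Phi_{W_i}}$. Hence, conditioned on $W^k$, the $i$th qubit realizes exactly a $\text{BSC}(M_{\Phi_{W_i}})$.

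Having established this reduction, the induced classical channel is precisely the binary symmetric queue-channel $\{\text{BSC}(M_{\Phi_{W_i}})\}$ named in the ``message to classical bits'' step, in which the crossover probabilities are driven by the correlated waiting times generated by Lindley's recursion. Crucially, all inter-symbol correlation comes solely from the correlations in $\{W_i\}$ and not from any residual quantum coupling. I would then apply the achievability result for classical binary symmetric queue-channels (Theorem~4 of \cite{MandayamJagannathanEA20}), which guarantees an achievable rate of $\lambda\,\mathbf{E}_\pi[1-h(M_{\Phi_W})]$ bits per unit time, using a capacity-achieving classical code for $\{\text{BSC}(M_{\Phi_{W_i}})\}$ together with the agreed-upon map $W\mapsto\rho_W$, which the encoder can implement precisely because it knows the waiting times.

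Finally, I would close the gap to the upper bound by substituting the identity $\chi(\Phi_W)=1-h(M_{\Phi_W})$ from \cite{King02} (already used in Theorem~\ref{thm:unitalPOVM}), so that the achieved rate equals $\lambda\,\mathbf{E}_\pi\chi(\Phi_W)$, matching Theorem~\ref{thm:capUB} exactly. The genuinely substantive ingredient is the achievability of the classical binary symmetric queue-channel capacity, which absorbs all the difficulty arising from the non-i.i.d. waiting times; but this is imported wholesale from \cite{MandayamJagannathanEA20}. I therefore expect the main thing to verify carefully to be the conditioning step itself---namely that product encoding together with product decoding leaves no inter-qubit quantum correlations, so that fixing $W^k$ truly decouples the measurement statistics---rather than any delicate new estimate.
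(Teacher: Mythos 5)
Your proposal is correct and takes essentially the same route as the paper's proof: reduce the queue-channel, qubit by qubit, to the classical binary symmetric queue-channel $\{\text{BSC}(M_{\Phi_{W_i}})\}$ via the computation from Theorem~\ref{thm:unitalPOVM} (with $\Phi$ replaced by $\Phi_{W_i}$), then invoke Theorem~4 of \cite{MandayamJagannathanEA20} and the identity $\chi(\Phi_W)=1-h(M_{\Phi_W})$ from \cite{King02} to match the upper bound of Theorem~\ref{thm:capUB}. You merely make explicit the conditioning on $W^k$ and the final rate-matching step, which the paper's terse proof leaves implicit.
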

\begin{proof}
Using the steps from the proof of Theorem~\ref{thm:unitalPOVM}, it directly
    follows that  the above strategy converts the unital qubit queue-channel
    into a binary symmetric queue-channel $\{\text{BSC}(M_{\Phi_{W_i}})\}$.
    Rest follows from Theorem~4 in \cite{MandayamJagannathanEA20}.
\end{proof}

\subsubsection{Encoder does not know waiting times}
\label{TxNotknowW}
In this setting the queue evolution cannot be predicted and
hence, the encoder has no knowledge of $\{W_i\}$. This is a more prevalent
setting in quantum communication. In many practical quantum
communication systems, the encoding and the decoding has to be chosen at time
zero, and cannot be adapted according to the queue evolution. We show that, in
this setting, again a simple product encoder and product decoder achieves
capacity for a large class of \textcolor{black}{unital} qubit queue-channels.

{\color{black}We obtain two results in this setting. First, we show that for a class of unital qubit queue-channels with certain Pauli decomposition characteristics whose Pauli decompositions satisfy a certain invariant ordering, the capacity can be achieved by product encoding and decoding in terms of Pauli matrices (Theorem~\ref{thm:TxNotknowW} and Lemma~\ref{lem:maxpxpluspi}). This class of channels includes the well known depolarizing channels, the symmetric generalized amplitude damping channel and other Pauli channels such as bit-flip and phase-flip channels. Second, we further introduce a broader class of unital qubit queue-channels which can be characterized without using the Pauli decomposition} for which product encoding and decoding is optimal, independent of their Pauli noise characteristics (Theorem~\ref{thm:TIE}).

Let us consider a {\em family} of i.i.d. unital qubit channels $\{\Phi_w\}$,
parametrized by a non-negative real number $w$. This means that the channel
acts on any joint state $\rho_{12\ldots k}$ as \[(\Phi_w \otimes \Phi_w \otimes
\ldots \Phi_w)(\rho_{12 \ldots k}),\] where the parameter $w$ determines the
map. {\color{black} As discussed in Theorem~\ref{thm:unitalPOVM}, the classical capacity of
this channel is achieved by the product encoding: $0 \to \rho_w^*$ and $1 \to I-\rho^*_w$, and product decoding using the projectors $\{\tau^*_w, I-\tau^*\}$, where
\begin{align}
    \rho_w^* \in \mathcal{R}_{\Phi_w} \text{ and } \nonumber
   \tau^*_w \in \Gamma_{\Phi_w,\rho_w^*}. 
    \label{eq:OptEncDecW}
\end{align}
Further, the classical capacity of this unital qubit channel is equal to the Shannon capacity of a binary symmetric channel with crossover probability $M_{\Phi_w}=||\Phi_w(\rho_w^*)||$.}

It is well known that any qubit state $\rho$ can be expressed as linear combination
of the Pauli matrices $\sigma_i$, $i=1,2,3$, and $I$. This leads to three
natural induced classical channels for any qubit channel: map $0$ and $1$, respectively, to
$\frac{I+\sigma_i}{2}$ and $\frac{I-\sigma_i}{2}$, the projectors onto the two
eigenvectors of $\sg_i$, and measure using these same projectors.
{\color{black} For the i.i.d. unital qubit channel $\Phi_w$, parametrized by $w$, this leads
to three induced binary symmetric channels $B_i(w)$, $i=1, 2, 3$. 

A family of i.i.d. unital qubit channels $\{\Phi_w\}$ and an unital qubit
queue-channel are closely related. In a unital qubit queue-channel, each qubit
$i$ sees a different unital qubit channel depending on its waiting time $W_i$.
Thus, any unital qubit queue-channel can be described using a family of i.i.d. unital qubit  channels $\{\Phi_w\}$, such
that the channel seen by any qubit $i$ is $\Phi_w$, where $w=W_i$. Clearly, the physical environment of the buffer decides the nature of the queue-channel and thus, determines the 
family $\{\Phi_w\}$ that corresponds to it.

\begin{definition}
\label{timeOrderedUnital}
    We call a unital qubit queue-channel {\bf Pauli-ordered} if the ordering of
    the Shannon capacities of the induced channels $B_1(w)$, $B_2(w)$ and
    $B_3(w)$ of the corresponding family of the i.i.d. unital qubit channels $\{\Phi_w\}$ is the same for all $w \ge 0$.
\end{definition}

Examples of Pauli-ordered unital qubit queue-channels are depolarising
queue-channels~\cite{MandayamJagannathanEA20} and symmetric generalized
amplitude damping channels considered in Sec.~\ref{SAmpDamp}.

As discussed before, a queue-channel models decoherence of a qubit  due to its interaction with the environment while waiting in a buffer. In this context, it is physically well motivated to work in a Markovian regime, leading to the well known quantum Markov semigroup structure \cite{breuer2002} for the channel that models the decoherence.  It can be shown that for a unital qubit queue-channel with a Markov semigroup structure, the Shannon capacities of $B_1(w)$, $B_2(w)$ and $B_3(w)$ do not change with $w$. Thus,  a unital qubit queue-channel with a Markov semigroup structure is indeed Pauli-ordered. This implies that the class of Pauli-ordered unital qubit queue-channels is a physically interesting and broad class of unital qubit queue-channels.

The following lemma is useful in designing an optimal encoding and decoding for  Pauli-ordered unital qubit queue channels.

\begin{lemma}
\label{lem:tauWSame}
    For a Pauli-ordered \textcolor{black}{unital} qubit queue-channel there exists a Pauli state $\hat{\sigma}$ such that for all $w\ge 0$, $(I+\hat{\sigma})/2 \in \mathcal{R}_{\Phi_w}$. This, in turn, implies that for any $w \ge 0$, $(I+\hat{\sigma})/2 \in \Gamma_{\Phi_w,(I+\hat{\sigma})/2}$.
\end{lemma}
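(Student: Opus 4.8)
The plan is to pass to the Bloch representation and reduce all three objects---$M_{\Phi_w}$, $\mathcal{R}_{\Phi_w}$, and the Shannon capacities of $B_1(w),B_2(w),B_3(w)$---to the diagonal entries of the channel's Bloch matrix. Writing a qubit state as $\rho=\tfrac12(I+\vec r\cdot\vec\sigma)$ with $|\vec r|\le1$, unitality ($\Phi_w(I)=I$) forces the action to be $\vec r\mapsto T(w)\vec r$ for a real $3\times3$ matrix $T(w)$ with no translation part; for the Pauli channels of interest (depolarizing, symmetric GAD, bit/phase-flip, and more generally the Markovian regime discussed above) $T(w)=\operatorname{diag}(t_1(w),t_2(w),t_3(w))$ is diagonal in the fixed Pauli basis. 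I would record at the outset that the largest eigenvalue of $\Phi_w(\rho)$ equals $\tfrac12(1+|T(w)\vec r|)$, so that $\|\Phi_w(\rho)\|$ is controlled entirely by $|T(w)\vec r|$.

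First I would identify the induced channel $B_i(w)$. Encoding $0\mapsto\tfrac12(I+\sigma_i)$, $1\mapsto\tfrac12(I-\sigma_i)$ and measuring $\{\tfrac12(I\pm\sigma_i)\}$ gives, by a one-line trace computation, a symmetric crossover probability $\tfrac12(1-t_i(w))$; hence $B_i(w)$ is a $\mathrm{BSC}$ of capacity $1-h\!\left(\tfrac12(1-|t_i(w)|)\right)$, which is strictly increasing in $|t_i(w)|$. Thus the Pauli-ordering hypothesis says precisely that the ordering of $|t_1(w)|,|t_2(w)|,|t_3(w)|$ is independent of $w$. Separately, maximizing $\tfrac12(1+|T(w)\vec r|)$ over $|\vec r|\le1$ gives $M_{\Phi_w}=\tfrac12\!\left(1+\max_i|t_i(w)|\right)$, with the maximum attained by the Pauli state whose Bloch vector points along the dominant axis; in particular the Pauli state $\tfrac12(I+\sigma_{j(w)})$ with $j(w)=\arg\max_i|t_i(w)|$ lies in $\mathcal{R}_{\Phi_w}$.

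The two steps then combine immediately: Pauli-ordering makes the maximizing index $j(w)\equiv j^\star$ constant in $w$, so setting $\hat\sigma=\sigma_{j^\star}$ (with the sign of the axis chosen so that the dominant entry is nonnegative) yields $\tfrac12(I+\hat\sigma)\in\mathcal{R}_{\Phi_w}$ for every $w\ge0$, which is the first assertion. For the second assertion I would compute $\Phi_w\!\left(\tfrac12(I+\hat\sigma)\right)=\tfrac12(I+t_{j^\star}(w)\hat\sigma)$ using unitality together with $\Phi_w(\hat\sigma)=t_{j^\star}(w)\hat\sigma$ (diagonality), and then $\Tr\!\left(\tfrac12(I+\hat\sigma)\cdot\tfrac12(I+t_{j^\star}(w)\hat\sigma)\right)=\tfrac12(1+t_{j^\star}(w))$, which equals $M_{\Phi_w}$ and therefore places $\tfrac12(I+\hat\sigma)$ in $\Gamma_{\Phi_w,(I+\hat\sigma)/2}$.

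The main obstacle---and the only place the argument uses more than bookkeeping---is the claim that the global maximizer of $\|\Phi_w(\rho)\|$ over all input states is a Pauli state, i.e.\ that the dominant right singular direction of $T(w)$ coincides with a coordinate (Pauli) axis. This is exactly the diagonality of $T(w)$ in the fixed Pauli basis; for a general unital qubit channel the singular vectors are rotated and the statement can fail, so I would either restrict to, or verify for, the diagonal family that the examples and the Markov-semigroup structure provide. A secondary point to check is the sign of $t_{j^\star}(w)$: $\mathcal{R}_{\Phi_w}$ only constrains $|t_{j^\star}(w)|$, whereas the $\Gamma$-membership computation above needs $t_{j^\star}(w)\ge0$ so that $\tfrac12(I+\hat\sigma)$ is the maximal eigenprojector of $\Phi_w(\tfrac12(I+\hat\sigma))$ rather than its complement; this holds for all the channels listed (their dominant Bloch entries are nonnegative), and I would note that a sign change of $t_{j^\star}$ across $w$ cannot occur without $\max_i|t_i(w)|$ vanishing, i.e.\ without the channel becoming completely depolarizing.
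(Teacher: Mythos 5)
Your proposal is correct and follows essentially the same route as the paper's own proof: the paper writes each $\Phi_w$ (up to local unitaries) as a Pauli channel whose Bloch action is exactly your diagonal $T(w)$ with $t_i(w)=1-2\sum_{j\neq i}p_j(w)$, shows via a Cauchy--Schwarz argument that $M_{\Phi_w}=\frac{1}{2}\left(1+\max_i|t_i(w)|\right)$ is attained at the Pauli state on the dominant axis (its Lemma~\ref{lem:maxpxpluspi}), and observes that the Shannon capacities of the $B_i(w)$ are monotone in $|t_i(w)|$, so Pauli-ordering freezes the maximizing index --- precisely your chain of reasoning. The two caveats you flag, namely that diagonality must hold in a $w$-independent Pauli basis and that the $\Gamma$-membership requires $t_{j^\star}(w)\ge 0$, are genuine subtleties that the paper's proof passes over silently (its local-unitary reduction does not track possible $w$-dependence of the unitaries, and its Cauchy--Schwarz step takes $\vec\alpha=\vec\beta$, which is optimal only when the dominant Bloch eigenvalue is nonnegative), so raising them is a strength of your write-up rather than a divergence from the paper.
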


Proof of this lemma is presented later. Here, we first derive an optimal product encoding and decoding strategy using this lemma.

{\bf Encoding and decoding:} We pick a capacity achieving encoder and decoder for
the classical binary symmetric queue-channel $\{\text{BSC}(M_{\Phi_{W_i}})\}$. We map the message to a string of $0$ and $1$ using that encoder.
Then, we map $0$ to $(I+\hat{\sigma})/2$ and $1$ to
$(I-\hat{\sigma})/2$ and use the projective measurement $\{(I-\hat{\sigma})/2, (I+\hat{\sigma})/2\}$ on the output states to obtain strings of $0$ and $1$. Finally, we use the capacity achieving decoder for the classical binary symmetric queue-channel $\{\text{BSC}(M_{\Phi_{W_i}})\}$ to decode the message.

\begin{theorem}
\label{thm:TxNotknowW}
    The above product encoding and product decoding strategy achieves the
    capacity upper-bound in Theorem~\ref{thm:capUB} for Pauli-ordered unital
    qubit queue-channels.
\end{theorem}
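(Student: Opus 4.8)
The plan is to show that the fixed, waiting-time-independent product encoding and decoding specified above transforms the unital qubit queue-channel into exactly the classical binary symmetric queue-channel $\{\text{BSC}(M_{\Phi_{W_i}})\}$, and then to invoke the classical queue-channel capacity result (Theorem~4 of~\cite{MandayamJagannathanEA20}) together with King's identity $\chi(\Phi_w) = 1 - h(M_{\Phi_w})$~\cite{King02}. The entire argument parallels the proof of Theorem~\ref{thm:unitalPOVM}; the only new ingredient is that a single Pauli direction $\hat{\sigma}$ be simultaneously optimal for every $\Phi_w$, which is precisely the content of Lemma~\ref{lem:tauWSame}.

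First I would fix the qubit index $i$ with waiting time $W_i$ and compute the two crossover probabilities of its induced binary channel, writing $\rho^* = \tau^* = (I+\hat{\sigma})/2$. The probability that $1$ is decoded as $0$ is $\text{Tr}\big((I-\tau^*)\,\Phi_{W_i}(I-\rho^*)\big)$; expanding via unitality gives $\Phi_{W_i}(I-\rho^*) = I - \Phi_{W_i}(\rho^*)$, and taking the trace collapses this, exactly as in Theorem~\ref{thm:unitalPOVM}, to $\text{Tr}\big(\tau^*\,\Phi_{W_i}(\rho^*)\big)$. The probability that $0$ is decoded as $1$ is $\text{Tr}\big(\tau^*\,\Phi_{W_i}(\rho^*)\big)$ directly. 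Hence both crossover probabilities coincide and the induced channel for qubit $i$ is symmetric.

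The crucial step is to identify this common value with $M_{\Phi_{W_i}}$. By Lemma~\ref{lem:tauWSame}, the single state $(I+\hat{\sigma})/2$ lies in $\mathcal{R}_{\Phi_w}$ for every $w \ge 0$, and correspondingly $(I+\hat{\sigma})/2 \in \Gamma_{\Phi_w,(I+\hat{\sigma})/2}$; therefore $\text{Tr}\big(\tau^*\,\Phi_{W_i}(\rho^*)\big) = M_{\Phi_{W_i}}$ by the definition of $\Gamma$. Thus the $i$th induced classical channel is exactly $\text{BSC}(M_{\Phi_{W_i}})$. Since the waiting-time sequence $\{W_i\}$ produced by Lindley's recursion is untouched by the encoding, the overall induced channel is the classical binary symmetric queue-channel $\{\text{BSC}(M_{\Phi_{W_i}})\}$ for which a capacity-achieving code was chosen.

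Finally I would apply Theorem~4 of~\cite{MandayamJagannathanEA20}: the achievable rate of this classical queue-channel equals $\lambda\,\mathbf{E}_{\pi}\big(1 - h(M_{\Phi_W})\big)$, whereupon King's identity $1 - h(M_{\Phi_w}) = \chi(\Phi_w)$ yields $\lambda\,\mathbf{E}_{\pi}\,\chi(\Phi_W)$, matching the upper bound of Theorem~\ref{thm:capUB}. The one genuinely nontrivial point, and the main obstacle, is the simultaneous optimality of a fixed $\hat{\sigma}$ across all waiting times: because the encoder is blind to $\{W_i\}$ it cannot adapt the Pauli direction, and a generic direction need not be an eigendirection realizing the largest Bloch-ellipsoid semi-axis of every $\Phi_w$. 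The Pauli-ordered hypothesis resolves exactly this through Lemma~\ref{lem:tauWSame}, which is where the real work resides and which I would establish separately.
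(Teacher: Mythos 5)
Your proposal is correct and follows essentially the same route as the paper's own proof: invoke Lemma~\ref{lem:tauWSame} to conclude that the fixed encoding $(I\pm\hat{\sigma})/2$ with the matching projective measurement induces the classical queue-channel $\{\text{BSC}(M_{\Phi_{W_i}})\}$, then apply Theorem~4 of~\cite{MandayamJagannathanEA20} together with $\chi(\Phi_w)=1-h(M_{\Phi_w})$ to match the upper bound of Theorem~\ref{thm:capUB}. Your version merely spells out the crossover-probability computation (via unitality) that the paper leaves implicit by reference to the proof of Theorem~\ref{thm:unitalPOVM}.
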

\begin{proof}
    Lemma~\ref{lem:tauWSame} implies that the above product encoding and decoding strategy
    converts a Pauli-ordered unital qubit queue-channel into a binary symmetric
    queue-channel with crossover probabilities 
    \[\text{Tr}\big(\Phi_{W_i}\left((I+\hat{\sigma})/2\right)~
    (I+\hat{\sigma})/2\big)=M_{\Phi_{W_i}}.\] 
    This is because by lemma~\ref{lem:tauWSame}, $(I+\hat{\sigma})/2 \in \mathcal{R}_{\Phi_{W_i}}$ and $(I+\hat{\sigma})/2 \in \Gamma_{\Phi_{W_i},(I+\hat{\sigma})/2}$.
    The rest follows from Theorem~4 in~\cite{MandayamJagannathanEA20}.
\end{proof}}

\begin{proof}[Proof of Lemma~\ref{lem:tauWSame}]
    First, note that up to local unitaries at the channel input and output, any
    unital channel can be written as a convex combination of Pauli channels :
    \[ \Phi(\rho) = (1-\sum_{i=1}^3 p_i) \rho + \sum_{i=1}^3 p_i \sigma_i \rho
    \sigma_i,\] 
    where $\sigma_i$ are the Pauli matrices~(see discussion between Prop.~6.41
    and Ex. 6.43 in~\cite{Holevo12}). 

    Thus, any $\Phi_w$ can be equivalently represented by the three
    probabilities $\{p_i(w), i=1,2,3\}$, where 
    \[ \Phi_w(\rho) = (1-\sum_{i=1}^3 p_i(w)) \rho + \sum_{i=1}^3 p_i(w)
    \sigma_i \rho \sigma_i,\] 
    and $\sum_i p_i(w) \le 1$.

    {\color{black} We prove Lemma~\ref{lem:tauWSame} using the following lemma, which gives
    an explicit expression for the optimal encoding and decoding in terms Pauli matrices.

\begin{lemma}
\label{lem:maxpxpluspi}
    For a unital channel $\Phi$, given by $\Phi(\rho)= (1-\sum_{i=1}^3 p_i) \rho +
    \sum_{i=1}^3 p_i \sigma_i \rho \sigma_i$, 
    \[ (I + \sigma_{i^*})/2 \in \mathcal{R}_{\Phi} \text{ and }  (I + \sigma_{i^*})/2 \in \Gamma_{\Phi,(I + \sigma_{i^*})/2},\] 
    where 
    \[i^* = \arg \max_{i\in \{1, 2, 3\}} |1-2\sum_{j\neq i}^3 p_j|.\]
\end{lemma}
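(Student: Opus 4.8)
The plan is to pass to the Bloch-ball picture, where every qubit density operator is written $\rho = \tfrac12(I + \vec r \cdot \vec\sigma)$ with $\vec r \in \mathbb{R}^3$, $|\vec r| \le 1$, and pure states lie on the unit sphere. First I would record how the given Pauli channel acts on Bloch vectors. Since conjugation by $\sigma_i$ negates the two Bloch components orthogonal to the $i$th axis and fixes the $i$th, bookkeeping of the four terms in $\Phi$ shows that $\Phi$ is the diagonal contraction $\vec r \mapsto (\lambda_1 r_1, \lambda_2 r_2, \lambda_3 r_3)$ with
\[ \lambda_i = 1 - 2\sum_{j \neq i} p_j, \qquad i = 1,2,3. \]
These $\lambda_i$ are precisely the quantities whose absolute values appear in the definition of $i^*$, since $|1 - 2\sum_{j\neq i} p_j| = |\lambda_i|$; thus $i^* = \arg\max_i |\lambda_i|$, and the whole lemma becomes a statement about the most-stretched Bloch axis.

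Next I would compute $M_\Phi$. The operator norm (largest eigenvalue) of a qubit state with Bloch vector $\vec s$ is $\tfrac12(1 + |\vec s|)$, so maximizing $\|\Phi(\rho)\|$ over $\rho$ amounts to maximizing $|\vec s|^2 = \sum_i \lambda_i^2 r_i^2$ subject to $\sum_i r_i^2 \le 1$. This is a linear program in $(r_1^2, r_2^2, r_3^2)$ over the simplex, so the optimum is attained at a vertex, i.e.\ by placing all Bloch weight on the axis with the largest $\lambda_i^2$, namely $i^*$. Hence $M_\Phi = \tfrac12(1 + |\lambda_{i^*}|)$, and the maximizer $\vec r = \hat e_{i^*}$ gives $\rho^* := (I + \sigma_{i^*})/2 \in \mathcal{R}_\Phi$, establishing the first containment. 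As a consistency check this matches $\chi(\Phi) = 1 - h(M_\Phi)$ from \cite{King02}.

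For the $\Gamma$-membership I would compute $\Phi(\rho^*) = \tfrac12(I + \lambda_{i^*}\sigma_{i^*})$ directly from the contraction, and then apply the trace identity $\text{Tr}\big(\tfrac12(I + \vec a\cdot\vec\sigma)\tfrac12(I + \vec b\cdot\vec\sigma)\big) = \tfrac12(1 + \vec a\cdot\vec b)$ with $\vec a = \lambda_{i^*}\hat e_{i^*}$ and $\vec b = \hat e_{i^*}$, giving $\text{Tr}\big(\Phi(\rho^*)\,(I+\sigma_{i^*})/2\big) = \tfrac12(1 + \lambda_{i^*})$. The main subtlety — and essentially the only point requiring care — is the sign of $\lambda_{i^*}$: this trace equals $M_\Phi = \tfrac12(1 + |\lambda_{i^*}|)$ exactly when $\lambda_{i^*} \ge 0$. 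Geometrically, $\Phi(\rho^*)$ is diagonal in the $\sigma_{i^*}$-eigenbasis with top eigenprojector $(I + \sgn(\lambda_{i^*})\sigma_{i^*})/2$, which coincides with $(I + \sigma_{i^*})/2$ precisely in the non-negative case. I would dispose of this as follows: for all the channels of interest here (depolarizing, bit-flip, phase-flip, symmetric GAD) and more generally in the Markov-semigroup regime, the Bloch eigenvalues take the form $\lambda_i(w) = e^{-\gamma_i w} > 0$, so $\lambda_{i^*} \ge 0$ automatically and the stated form is exact; in the remaining case one simply replaces the decoding projector by the top eigenprojector $(I + \sgn(\lambda_{i^*})\sigma_{i^*})/2 \in \Gamma_{\Phi,\rho^*}$, which leaves the induced $\text{BSC}(M_\Phi)$ crossover probability unchanged.
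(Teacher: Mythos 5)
Your proof is correct and follows essentially the same route as the paper's: pass to the Bloch picture, observe that $\Phi$ acts as the diagonal contraction $\vec r \mapsto (\lambda_1 r_1,\lambda_2 r_2,\lambda_3 r_3)$ with $\lambda_i = 1-2\sum_{j\neq i}p_j$, maximize the output norm to get $M_\Phi = \tfrac12(1+|\lambda_{i^*}|)$, and finish with the trace identity $\Tr\bigl(\tfrac12(I+\vec a\cdot\vec\sigma)\,\tfrac12(I+\vec b\cdot\vec\sigma)\bigr) = \tfrac12(1+\vec a\cdot\vec b)$. The paper phrases the maximization as a single Cauchy--Schwarz bound on $\tfrac12\bigl(1+(\vec\lambda*\vec\alpha)\cdot\vec\beta\bigr)$ jointly over the input Bloch vector $\vec\alpha$ and the pure measurement state $\vec\beta$, whereas you split it into an operator-norm computation followed by a linear program over the simplex; these are the same calculation. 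The one point where you genuinely depart from the paper is the sign of $\lambda_{i^*}$, and you are right to insist on it: the paper's assertion that the supremum is attained at $\vec\alpha=\vec\beta$ is valid only when $\lambda_{i^*}\ge 0$, since Cauchy--Schwarz actually forces $\vec\beta = \sgn(\lambda_{i^*})\,\vec\alpha$; consequently the second containment of the lemma, $(I+\sigma_{i^*})/2 \in \Gamma_{\Phi,(I+\sigma_{i^*})/2}$, fails as stated when $\lambda_{i^*}<0$ (for instance $p_1=p_2=1/2$, $p_3=0$ gives $\lambda_3=-1$, and then $\Tr\bigl(\Phi(\rho^*)\,\rho^*\bigr)=0$ while $M_\Phi=1$), even though the first containment $(I+\sigma_{i^*})/2\in\mathcal{R}_\Phi$ survives regardless of sign. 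Your repair---taking the decoding projector to be the top eigenprojector $(I+\sgn(\lambda_{i^*})\sigma_{i^*})/2$---is exactly the right one: it restores the $\Gamma$-membership, still induces a binary symmetric channel of capacity $1-h(M_\Phi)$, and is what the paper's proof and its downstream use in Theorem~\ref{thm:TxNotknowW} implicitly assume. In short, your write-up is, if anything, more complete than the paper's own argument.
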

}
    Lemma~\ref{lem:maxpxpluspi} is applicable to any $\Phi_w$ parametrized by
    $\{p_i(w)\}$. Thus, if the ordering of $\eta_i(w)=|\sum_{j=1}^3 p_j(w) -
    p_i(W) - \frac{1}{2}|$ remains unchanged with $w$, $\rho^*_w$ and $\tau^*_w$
    remain unchanged with $w$.
 
    Finally, note that the ordering of $\{\eta_i(w)\}$ are the same as the
    ordering of the Shannon capacities of $B_i(w)$.  To see this, let us first
    find the crossover probability of $B_i(w)$. Using simple trace
    calculations, one can show that the crossover
    probability $b_i(w)$ of the BSC $B_i(w)$ is $p_i + (1-\sum_{j=1}^3 p_j)$.
    Thus, its Shannon capacity is $1-h(b_i(w))$. Note that $1-h(b)=0$ at
    $b=\frac{1}{2}$ and increases monotonically with $|b-\frac{1}{2}|$. Thus,
    the capacity of $B_i(w)$ is monotonic in $\eta_i(w)$ and hence, ordering of
    $\eta_i(w)$ remains unchanged with $w$ for a Pauli-ordered unital qubit
    queue-channel, which completes the proof of this lemma.
\end{proof}

\begin{proof}[Proof of Lemma~\ref{lem:maxpxpluspi}]

    For any state $\rho=\frac{I}{2}+\sum_{i=1}^3 \frac{\alpha_i}{2} \sigma_i$,
    where $\sum_{i=1}^3 \alpha_i^2 \le 1$, 
    \[\Phi(\rho)=\sum_{i=1}^3
    \frac{\alpha_i}{2}(1-2\sum_{j\neq i}^3 p_j) \sigma_i+ (1-\sum_{i=1}^3 p_i)
    \frac{I}{2}.\] 
    Thus, $\rho= \frac{1}{2}(I + \vec{\alpha}.\vec{\sigma}) :=
    \frac{1}{2}(I+\sum_{i=1}^3 \alpha_i \sigma_i)$, where $\vec{\alpha} =
    (\al_1, \al_2, \al_3)$ and $\vec{\alpha}.\vec{\alpha} \le 1$, the channel
    output 
    \[\Phi(\rho)= \frac{1}{2}(I + (\vec{\lm}*\vec{\alpha}).\vec{\sigma}),  \]
    where $\vec{\lm} = (\lm_1, \lm_2, \lm_3)$, $\lm_i = 1 - 2 \sum_{j \neq i}
    p_j$, and \textcolor{black}{$(\vec{\lm}*\vec{\alpha})_i =
    \vec{\lm}_i\vec{\alpha}_i$ denotes entry-wise dot product between
    $\vec{\lm}$ and $\vec{\al}$}. Thus, 
\begin{align} 
    M_{\Phi} & = \sup_{ \{ \vec{\al}: \; \vec{\al}.\vec{\al} \leq 1 \}, \; \text{pure} \;
    \tau} 
    \frac{1}{2}\Tr \{ \big( (I + (\vec{\lm}*\vec{\alpha}).\vec{\sigma} \big) \tau \}  \nonumber \\
    & = \sup_{ \{ \vec{\al}: \; \vec{\al}.\vec{\al} \leq 1 \}, \; \{ \vec{\bt}: \; \vec{\bt}.\vec{\bt} = 1\} }
    \frac{1}{4}\Tr \{ \big( (I + (\vec{\lm}*\vec{\alpha}).\vec{\sigma} \big) (I + \vec{\bt}.\vec{\sigma}) \}  \nonumber
\end{align}
    The last step follows from the fact that $\tau$ is a pure state.

    After doing the matrix products and using some linear algebra involving
    linearity of trace, and the facts that $\Tr(\sigma_i)=0$, $\sigma_i^2 =I$,
    and for $i \neq j$, $\sg_i \sg_j = - \sg_j \sg_i$, one obtains
\begin{align}
    M_{\Phi} = 
    \sup_{ \{ \vec{\al}: \; \vec{\al}.\vec{\al} \leq 1 \}, \; \{ \vec{\bt}: \; \vec{\bt}.\vec{\bt} = 1\} }
    \frac{1}{2} \left( 1+ \textcolor{black}{(\vec{\lm}*\vec{\al}).\vec{\bt}} \right). \nonumber
 \end{align}
    It {\color{black} follows from the Cauchy-Schwartz inequality} that the supremum is obtained when $\vec{\al} = \vec{\bt}$
    and $|\bt_i|=1$ for $i=i^* = {\rm argmax} \; |\lm_i|$. Using the definition
    of $\lm_i$, it follows that
    $i^* = {\rm argmax} \; |1 - 2 \sum_{j \neq i} p_j|$.

\end{proof}

The notion of Pauli-ordered unital qubit queue-channels is directly connected to
the binary i.i.d. classical channels induced by the Pauli matrices. This gives
a physical interpretation of the conditions under which the statement in
Theorem~\ref{thm:TxNotknowW} holds true.  {\color{black}However, a result like
Theorem~\ref{thm:TxNotknowW}, holds for a broader class of unital qubit channels, which can be characterized without using any reference to their Pauli decompositions.
\begin{definition}
\label{def:symGADCGeneralized}
    Let $\{\Phi_w\}$ be the class of i.i.d. unital qubit channels corresponding
    to the unital qubit queue-channels. We call the queue-channel to have a {\bf
    waiting-invariant norm maximizer} if \[\bigcap_{w\ge 0} \mathcal{R}_{\Phi_w}\neq \emptyset.\]
\end{definition}

This class of queue-channels includes the class of Pauli-ordered queue-channels since it follows directly from Lemma \ref{lem:maxpxpluspi} that there exists a Pauli state $\hat{\sigma}$ such that $(I+\hat{\sigma})/2 \in \bigcap_{w\ge 0} \mathcal{R}_{\Phi_w}$.

{\bf Encoding and decoding:} Let $\bar{\rho}$ be a state in $\bigcap_{w\ge 0} \mathcal{R}_{\Phi_w}$ and $\bar{\tau}_{W_i}$ be a state in $\Gamma_{\Phi_{W_i},\bar{\rho}}$.
We pick a capacity achieving code for the
queue-channel BSC~($M_{\Phi_{W_i}}$) and generate classical codes accordingly. Then, we map $\{0,1\}$ to $\bar{\rho}$ and
$I-\bar{\rho}$ and decode the output states using the POVM $\{I-\bar{\tau}_{W_i}, \bar{\tau}_{W_i}\}$. Clearly, the encoder does not depend on individual $W_i$s, but the decoder may. In that sense also, this strategy is a generalization of the strategy used for Pauli-ordered channels.

\begin{theorem}
\label{thm:TIE}
    The above product encoding and product decoding strategy achieves the
    capacity upper-bound in Theorem~\ref{thm:capUB} for unital qubit
    queue-channels with a waiting-invariant norm maximizer.
\end{theorem}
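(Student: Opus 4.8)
The plan is to follow the template already used in the proofs of Theorem~\ref{thm:unitalPOVM} and Theorem~\ref{thm:TxNotknowW}: first show that the proposed encoding and decoding turn the queue-channel into the classical binary symmetric queue-channel $\{\text{BSC}(M_{\Phi_{W_i}})\}$, and then invoke Theorem~4 of~\cite{MandayamJagannathanEA20} together with the identity $\chi(\Phi_w)=1-h(M_{\Phi_w})$ from~\cite{King02} to conclude that the achievable rate equals the upper bound $\lambda~\EX_{\pi}\chi(\Phi_W)$ of Theorem~\ref{thm:capUB}.

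First I would verify the two crossover probabilities for the $i$th qubit, reusing the computation from the proof of Theorem~\ref{thm:unitalPOVM} verbatim. The probability that an input $0$ (encoded as $\bar{\rho}$) is decoded as $1$ is $\text{Tr}(\bar{\tau}_{W_i}\Phi_{W_i}(\bar{\rho}))$. Since the waiting-invariant norm maximizer hypothesis guarantees $\bar{\rho}\in\bigcap_{w\ge 0}\mathcal{R}_{\Phi_w}\subseteq\mathcal{R}_{\Phi_{W_i}}$, and $\bar{\tau}_{W_i}\in\Gamma_{\Phi_{W_i},\bar{\rho}}$ by construction, this trace equals $M_{\Phi_{W_i}}$ directly by the defining relation~\eqref{eq:tauStar} of $\Gamma$. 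For the reverse error, the probability that an input $1$ (encoded as $I-\bar{\rho}$) is decoded as $0$ is $\text{Tr}((I-\bar{\tau}_{W_i})\Phi_{W_i}(I-\bar{\rho}))$; expanding with unitality ($\Phi_{W_i}(I-\bar{\rho})=I-\Phi_{W_i}(\bar{\rho})$), linearity of the trace, and the facts $\text{Tr}(I)=2$, $\text{Tr}(\bar{\tau}_{W_i})=\text{Tr}(\Phi_{W_i}(\bar{\rho}))=1$, it collapses to $\text{Tr}(\bar{\tau}_{W_i}\Phi_{W_i}(\bar{\rho}))=M_{\Phi_{W_i}}$ as well. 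Thus both crossover probabilities coincide and the $i$th channel use is exactly a $\text{BSC}(M_{\Phi_{W_i}})$.

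With the induced classical queue-channel $\{\text{BSC}(M_{\Phi_{W_i}})\}$ identified, the remaining step is routine: Theorem~4 of~\cite{MandayamJagannathanEA20} supplies an achievable rate equal to $\lambda~\EX_{\pi}[1-h(M_{\Phi_W})]$ for this binary symmetric queue-channel, and substituting $\chi(\Phi_W)=1-h(M_{\Phi_W})$ shows this matches the upper bound of Theorem~\ref{thm:capUB}, completing the argument.

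The step that carries the actual content is recognizing \emph{why} the weaker hypothesis suffices, and I do not expect a genuine computational obstacle beyond it. The crossover-probability calculation above is insensitive to whether $\bar{\rho}$ is a Pauli state; it uses only that $\bar{\rho}\in\mathcal{R}_{\Phi_{W_i}}$ for \emph{every} $i$, i.e.\ that a single encoding state simultaneously maximizes the operator norm across all waiting times. This is precisely the role of the condition $\bigcap_{w\ge 0}\mathcal{R}_{\Phi_w}\neq\emptyset$: it decouples the encoder from $\{W_i\}$ (the encoder commits to the fixed state $\bar{\rho}$ at time zero, without predicting the queue), while the decoder, which does observe the waiting times, is free to adapt its projector $\bar{\tau}_{W_i}\in\Gamma_{\Phi_{W_i},\bar{\rho}}$ per qubit. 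The Pauli-ordered case of Theorem~\ref{thm:TxNotknowW} is recovered as a special instance, since Lemma~\ref{lem:maxpxpluspi} exhibits an explicit Pauli state lying in the intersection; here the symmetry of the two crossover probabilities, already established in the i.i.d.\ setting, transfers unchanged once $\bar{\rho}$ is pinned down.
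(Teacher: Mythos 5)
Your proof is correct and follows essentially the same route as the paper's: show that the strategy converts the queue-channel into the classical binary symmetric queue-channel $\{\text{BSC}(M_{\Phi_{W_i}})\}$ using $\bar{\rho}\in\mathcal{R}_{\Phi_{W_i}}$ and $\bar{\tau}_{W_i}\in\Gamma_{\Phi_{W_i},\bar{\rho}}$, then invoke Theorem~4 of \cite{MandayamJagannathanEA20} to match the upper bound of Theorem~\ref{thm:capUB}. Your write-up is in fact somewhat more explicit than the paper's, which records only one crossover probability and leaves the unitality-based symmetry of the two error probabilities implicit by analogy with the proof of Theorem~\ref{thm:unitalPOVM}.
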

\begin{proof}
    It is enough to prove that the above encoding and decoding strategy converts a unital qubit queue-channel with waiting-invariant norm maximizer into a binary symmetric queue-channel $\{\text{BSC}(M_{\Phi_{W_i}})\}$. The rest follows from Theorem~4 in  \cite{MandayamJagannathanEA20}.
    
    The crossover probability for state $i$ under this induced channel is given by
    \[\text{Tr}\big(\Phi_{W_i}\left(\bar{\rho}\right)~
    \tau_{W_i}\big)=\text{Tr}\big(\Phi_{W_i}\left(\rho\right)~\tau\big),\] 
    for $\rho \in \mathcal{R}_{\Phi_{W_i}}$ and $\tau \in \Gamma_{\Phi_{W_i},\rho}$. Hence, this quantity is equal to $M_{\Phi_{W_i}}$.
\end{proof}

Clearly, Theorem~\ref{thm:TIE} is more general than Theorem~\ref{thm:TxNotknowW} as it is applicable to a broader class of unital qubit queue-channels. However, Lemma~\ref{lem:maxpxpluspi}, which is an intermediate result for Theorem~\ref{thm:TxNotknowW}, 
 gives a simple closed form encoder and decoder in terms of Pauli matrices. This is of independent interest as it applies to any i.i.d. unital qubit channel as well. Also, we did not come across any practical scenario which may lead to unital queue-channels that are not Pauli-ordered. 
}

\section{Qubit Generalized Amplitude Damping~(GAD) Queue-channels}
\subsection{I.I.D GAD}
\label{SAmpDamp}
The qubit generalized amplitude damping~(GAD) channel 
$\AC_{p,n} : \LC(\HC_a) \mapsto \LC(\HC_b)$ is a two parameter family of
channels where the parameters $p$ and $n$ are between zero and one. The channel
has a qubit input and qubit output---$d_a = d_b = 2$--- and its superoperator
has the form,
\begin{equation}
    \AC_{p,n}(\rho) = \sum_{i=0}^3 K_i \rho K_i^{\dag},
    \label{ampDampGlm}
\end{equation}
where
\begin{align}
    K_0 &=   \sqrt{1-n}(\dyad{0}{0} + \sqrt{1-p}\dyad{1}{1}), 
    &K_1 &=   \sqrt{p(1-n)}\dyad{0}{1}, \\
    K_2 &=   \sqrt{n}(\sqrt{1-p}\dyad{0}{0} + \dyad{1}{1}), \quad \text{and}
    &K_3 &=   \sqrt{pn}\dyad{1}{0}
    \label{ampDampGlmKr}
\end{align}
are Kraus operators. The GAD~\eqref{ampDampGlm} channel can also be expressed as
\begin{equation}
    \AC_{p,n} = (1-n)\AC_{p,0} + n \AC_{p,1}.
    \label{ampDampGlmcv}
\end{equation}
The above representation provides the following insightful interpretation.  The
parameter $n$ represents the mixing of $\AC_{p,0}$ with $\AC_{p,1}$, where each
channel $\AC_{p,i}$~($i = 0$ or $1$) is an amplitude damping channel that
favours the state $[i]$~(\textcolor{black}{here we use the notation $[\psi]$
for $\dya{\psi}$}) by keeping it fixed and maps the orthogonal state
$[1-i]$ to $[i]$ with damping probability $p$.
When $n = 1/2$, $\AC_{p,n}$ is unital and we get equal mixing of both
$\AC_{p,0}$ and $\AC_{p,1}$. This equal mixing represents noise where each
state $[i]$~($i = 0,1$) is mapped to itself with probability \textcolor{black}{$p/2$} and to $[1-i]$
with probability \textcolor{black}{$1-p/2$}; in other words, this $n=1/2$ noise treats both $[0]$
and $[1]$ identically. However, when $n$ is not half, the action of $\AC_{p,n}$
on $[0]$ is different from its action on $[1]$.  In particular, $[0]$ is mapped
to itself with probability $1-pn$ and to $[1]$ with probability $pn$, and $[1]$
is mapped to itself with probability $1-p(1-n)$ and to $[0]$ with probability
$p(1-n)$.

Any qubit density operator can be written in the Bloch parametrization,
\begin{equation}
    \rho(\rB) = \frac{1}{2} (I + \rB. \vec{\sg}) 
    := \frac{1}{2}(I + x \sg^x + y \sg^y + z \sg^z),
    \label{BlochRep}
\end{equation}
where the Bloch vector $\rB = (x,y,z)$ has norm $|\rB| \leq 1$,
\begin{equation}
    \sg^x = \mat{0 & 1 \\ 1 & 0}, \quad 
    \sg^y = \mat{0 & -i \\ i & 0}, \quad \text{and} \quad
    \sg^z = \mat{1 & 0 \\ 0 & -1}
\end{equation}
are the Pauli matrices, written in the standard basis $\{\ket{0}, \ket{1}\}$.
Using the Bloch parametrization, the entropy 
\begin{equation}
    S\big( \rho(\rB) \big) = h\big( (1 - |\rB|)/2 \big),
    \label{qBitEntro}
\end{equation}
where $h(x):= -[x \log x + (1-x) \log (1-x)]$ is the binary entropy function
and $|\rB| = \sqrt{\rB.\rB}$ is the norm of $\rB$.
An input density operator $\rho(\rB)$ is mapped by $\AC_{p,n}$ to an output
density operator with Bloch vector,
\begin{equation}
    \rB' = (\sqrt{1-p}x, \sqrt{1-p}y, (1-p)z + p(1-2n) ).
    \label{ampGlnBlochVec}
\end{equation}
The GADC is unital at $n=1/2$; that is, $\AC_{p,\frac{1}{2}}(I) = I$. The GADC is
entanglement breaking~\cite{King02, Shor02, KhatriSharmaEA20} when
\begin{equation}
    2(\sqrt{2} - 1) \leq p \leq 1  \quad \text{and} \quad
    \frac{1}{2}(1 - l(p)) \leq n \leq \frac{1}{2}(1 + l(p)),
    \label{GADCebt}
\end{equation}
where $l(p) = \sqrt{\frac{p^2 + 4p -4}{p^2}}$. The Holevo capacity of unital
qubit channels and entanglement breaking channels is additive; as a result,
when $n=1/2$ or when the values of parameters $p$ and $n$
satisfy~\eqref{GADCebt}, the Holevo information of the generalized amplitude
damping channel,$\chi^{(1)}(\AC_{p,n})$ , equals the classical capacity of the
channel, $\chi(\AC_{p,n})$. For other values of $p$ and $n$, the classical
capacity of the GADC is not known because for these parameter values, the
Holevo information of the channel is not known to be additive or non-additive.
The actual value of the Holevo information can be computed numerically.  Next,
we briefly discuss this numerical calculation for completeness.

\subsection{Holevo Information}
\label{SGADCHolevo}
Let $[\al_+]$ and $[\al_-]$ be projectors on states with Bloch vector
\begin{equation}
    \rB_+ = (\sqrt{1-z^2},0,z), \quad \text{and} \quad 
    \rB_- = (-\sqrt{1-z^2},0,z),
    \label{ensGAPD}
\end{equation}
respectively; here $-1 \leq z \leq 1$. Notice, $[\al_+]$ and $[\al_-]$ are not
orthogonal, except when $z = 0$. It has been shown~\cite{LiZhenMaoFa07} that the
Holevo information,
\begin{equation}
    \chi^{(1)}(\AC_{p,n}) = \max_{\{-1 \leq z \leq 1\}} \; 
    S\big(\AC_{p,n}(\sg)\big) - [S\big(\AC_{p,n}([\al_+])\big) +
    S\big(\AC_{p,n}([\al_-])\big)]/2 ,
    \label{ch5_HolChi}  
\end{equation}
where $\sg = ([\al_+] + [\al_-])/2$. In the above equation, the optimizing
$z$ has the value
\begin{equation}
    z^* =\frac {u - p(1-2n)}{1-p},
    \label{ch5_optZGADC}
\end{equation}
where $u$ comes from solving,
\begin{equation}
    \big(pu - p^2 (1-2n) - p(1-p)(1-2n) \big) f'(r^*) = -r^*(1-\gm) f'(u),
\end{equation}
with
\begin{align}
    f(x)  &:= (1 + x) \log_2 (1+x) + (1-x) \log_2 (1-x), \\
    f'(x) &=  \log_2 \Big( \frac{1+x}{1-x}\Big), \quad \text{and} \\ 
    r^*   &:= \sqrt{1 - p - \frac{\big(u - p(1-2n) \big)^2}{1-p} + u^2 }.
\end{align}
Using the value of $z^*$ in \eqref{ch5_optZGADC} gives,
\begin{equation}
    \chi^{(1)}(\AC_{p,n}) = \frac{1}{2} \big( f(r^*) - \log_2 (1-u^2) -uf'(u) \big).
    \label{GADCHolevo}
\end{equation}
Solving~\eqref{GADCebt} for $n \leq 1/2$ gives a range,
\begin{equation}
    p^* \leq p \leq 1,
    \label{EBTRange}
\end{equation}
where the GAD channel in entanglement breaking. Here the value,
\begin{equation}
    p^* = \max\Big( 2(\sqrt{2} - 1),  \frac{\sqrt{1 + 4n(1-n)}-1}{2n(1-n)} \Big).
\end{equation}
As indicated earlier, entanglement breaking channels have additive Holevo
capacity.  Thus, when $p$ satisfies \eqref{EBTRange}, the GAD channel has additive
Holevo capacity. While the Holevo information $\chi^{(1)}(\AC_{n,p})$ gives the
product state classical channel capacity, it doesn't give an explicit encoding
and decoding that achieves this capacity. In what follows, we construct
explicit encoding and decodings---in other words, we construct induced classical channels,
and compare the capacity of these channels to the product state classical
capacity $\chi^{(1)}(\AC_{n,p})$. For $n=1/2$, we find the optimal encoding
and decoding which achieves $\chi^{(1)}(\AC_{1/2,p})$ for all $0 \leq p \leq 1$.

\subsection{Induced Channels}
\label{SsubICAmpGln}
To obtain an induced channel for $\AC_{p,n}: \LC(\HC_a) \mapsto \LC(\HC_b)$ one
must choose an encoding and decoding. To choose an encoding, $\EC: \XC \mapsto
\LC(\HC_a)$, one fixes a set of input states $\{\rho(x)\}$.  To choose a
decoding, $\DC: \LC(\HC_b) \mapsto \YC$, one fixes an output measurement POVM
$\{\Lm(y)\}$. Together this encoding-decoding results in an induced channel
with conditional probability $p(y|x) = \Tr(\rho(x) \Lm(y))$.
A priori, there is no clear choice for these input states and output
measurement. However, the generalized qubit amplitude damping channel satisfies
an equation
\begin{equation}
    \AC_{p,n}\big(\sg^z_a \; \rho \; (\sg^z_a)^{\dag} \big)  = \sg^z_b \; \AC_{p,n}(\rho) \; (\sg^z_b)^{\dag},
    \label{ampCov}
\end{equation}
where the subscripts $a$ and $b$ on the Pauli operator $\sg^z$ signify the
space on which the operator acts. The above equation implies that the
generalized amplitude damping channel has a rotational symmetry around the
$z$-axis.
Using this rotational symmetry and the fact that $\AC_{p,n}$ is a qubit
input-output channel one may choose an encoding $\EC(x) = \rho(x)$
where $x = 0$ or $1$ and $\{\rho(x)\}$ are two orthogonal input states that
remain unchanged under the $\sg^z_a$ symmetry operations; that is,
$\rho(x) = [x]$.  To decode, one may apply a protocol
for correctly identifying a state chosen uniformly from a set of two known
states $\AC_{p,n}([0])$ and $\AC_{p,n}([1])$ with highest probability.
This protocol comes from the theory of quantum state
discrimination~\cite{Helstrom69}. It uses a POVM with two elements $\{E, I_b -
E\}$, where $E$ is a projector onto the space of positive eigenvalues of
$\AC_{p,n}([0]) - \AC_{p,n}([1])$. An unknown state, either $\AC_{p,n}([0])$ or
$\AC_{p,n}([1])$ with equal probability, is measured using the POVM. If the
outcome corresponding to $E$ occurs, the unknown state is guessed to be
$\AC_{p,n}([0])$; otherwise, the guess is $\AC_{p,n}([1])$. In the present
case, a simple calculation shows that $E = [0]$.

Encoding $\EC(x) = [x]$,~($x = 0,1$) and decoding based on the POVM $\{[0],
[1]\}$, coming from the state distrimination protocol outlined above, results
in an induced channel $\NC_1$. This channel is a BAC that flips $x=0$ to $y=1$
with probability $pn$ but flips $x=1$ to $y=0$ with probability $p(1-n)$. Its
capacity $C(\NC_1)$ has a simple closed form expression\textcolor{black}{~\eqref{eq:BACCap}}. For
$C(\NC_1)$, this expression is unchanged when $n$ is replaced with $1-n$, thus
we may restrict our attention to $0 \leq n \leq 1/2$. 

At $n = 1/2$, $\AC_{p,n}$ is unital. In Sec.~\ref{sec:inducedChannel}, we
defined an induced channel which achieves the Holevo information of any qubit
unital channel.  On the basis of that induced channel, we may construct an
induced channel for values of $n$ different from $1/2$. In this construction
the encoding map $\EC(0) = \rho^*$, and $\EC(1) = I-\rho^*$~($\rho^*$ defined below
eq.~\eqref{eq:MPhi}); the decoding map $\DC$ measures the output of
$\AC_{p,n}$ using the POVM $\{\tau^*, I-\tau^*\}$~($\tau^*$ defined in
eq.~\eqref{eq:tauStar}) to return $0$ when the measurement outcome corresponds to
POVM element $\tau^*$, otherwise return $1$. This encoding-decoding results in
the induced channel $\NC_2$ which is a BAC. This BAC flips input $0$ to output $1$ with
probability $(1 - |\rB'|)/2$, $|\rB'|^2 = 4n(1-n)(1-p) + (1-2n)^2$ and flips
input $1$ to output $0$ with probability $\frac{1}{2}(1 + \frac{p\big(1 - (1-2n)^2
\big)-1}{|\rB'|})$. The channel's capacity, $C(\NC_2)$~(computed using \textcolor{black}{expression
\eqref{eq:BACCap}}), remains invariant when $n$ is
replaced with $1-n$. This invariance permits us to restrict ourselves to the
parameter range $0 \leq n \leq 1/2$.

Next, we consider a third induced channel. This channel is based on the
computation of the Holevo information of $\AC_{p,n}$ in Sec.~\ref{SGADCHolevo}.
Here, encoding is performed using possibly non-orthogonal states and decoding
is performed using a measurement designed to distinguish these encoded states
at the channel output with maximum probability.  The encoding maps $x=0$ and
$x=1$ to $[\al_+]$ and $[\al_-]$~(defined via eq.~\eqref{ensGAPD}),
respectively. The decoding is performed using a POVM $\{ \Lm(y) \}$ where
$\Lm(y)$ at $y=0$ is the projector onto the space of positive eigenvalues of
$\AC_{p,n}([\al_+]) - \AC_{p,n}([\al_-])$. This projector is simply $[x+]$,
where \textcolor{black}{$\ket{x+} := (\ket{0} + \ket{1})/\sqrt{2}$}. This encoding-decoding scheme
results in a one-parameter family of induced channels $\NC_3(z)$. This channel is
a BSC with flip probability $q(z) = (1- a(z)\sqrt{1-p})/2$, where $a(z) =
\sqrt{1-z^2}$.  Interestingly, this family of induced channels, coming from the
two parameter GAD channel $\AC_{p,n}$, does not depend on the parameter $n$.
The Shannon capacity of $\NC_3(z)$ is simply
\begin{equation}
    C(\NC_3(z)) = \; 1 - h \big(q (z) \big).
    \label{M2Shannon}
\end{equation}
For a fixed $p$, one can easily show that $C(\NC_3(z))$ is maximum when $z = 0$;
thus, $\NC_3 =\NC_3(0)$ has the largest Shannon capacity among the one-parameter
family of induced channels $\NC_3(z)$. This induced channel $\NC_3$ is simply a
BSC with flip probability $q = (1 - \sqrt{1-p})/2$. 

We compare the capacities of the three induced channels $\NC_1$, $\NC_2$, and
$\NC_3$.  As mentioned earlier, we can restrict ourselves to $0 \leq n \leq 1/2$.
A straightforward calculation shows that at $n=0$, $\NC_1$ and $\NC_2$ are
equivalent up to permutation of the inputs and output and thus $C(\NC_1) =
C(\NC_2)$. In general, $0 \leq n \leq 1/2$, here simple numerics can be used to
show that
\begin{equation}
    C(\NC_1) \leq C(\NC_2) \leq C(\NC_3).
\end{equation}
All inequalities above are numerically found to be strict when $0 < n < 1/2$
and $0 < p < 1$. At $n=1/2$, $\NC_2$ and $\NC_3$ become identical, they are both
BSC with flip probability $q = (1 - \sqrt{1-p})/2$. This flip probability can
be easily shown to equal $\NC_{\AC_{p,1/2}}$~(defined in eq.~\eqref{eq:MPhi}).
Using this equality, or the fact that $\NC_2 = \NC_3$ is the induced channel which
achieves the Holevo information when $\AC_{p,n}$ is unital at $n=1/2$, we
conclude $C(\NC_3) = \chi^{(1)}(\AC_{p,1/2})$.

For values of $n<1/2$ we compare the capacity of the $\NC_3$, the induced channel
with the largest capacity among $\NC_1$, $\NC_2$, and $\NC_3$ with
$\chi^{(1)}(\AC_{p,n})$. We numerically find that for values of $n < 1/2$ and
$0 < p < 1$, $C(\NC_3)< \chi^{(1)}(\AC_{p,n})$~(see Fig.~\ref{Fig_ICHolevo}).

\begin{figure}[ht]
    \begin{center}
        \includegraphics[scale=.75]{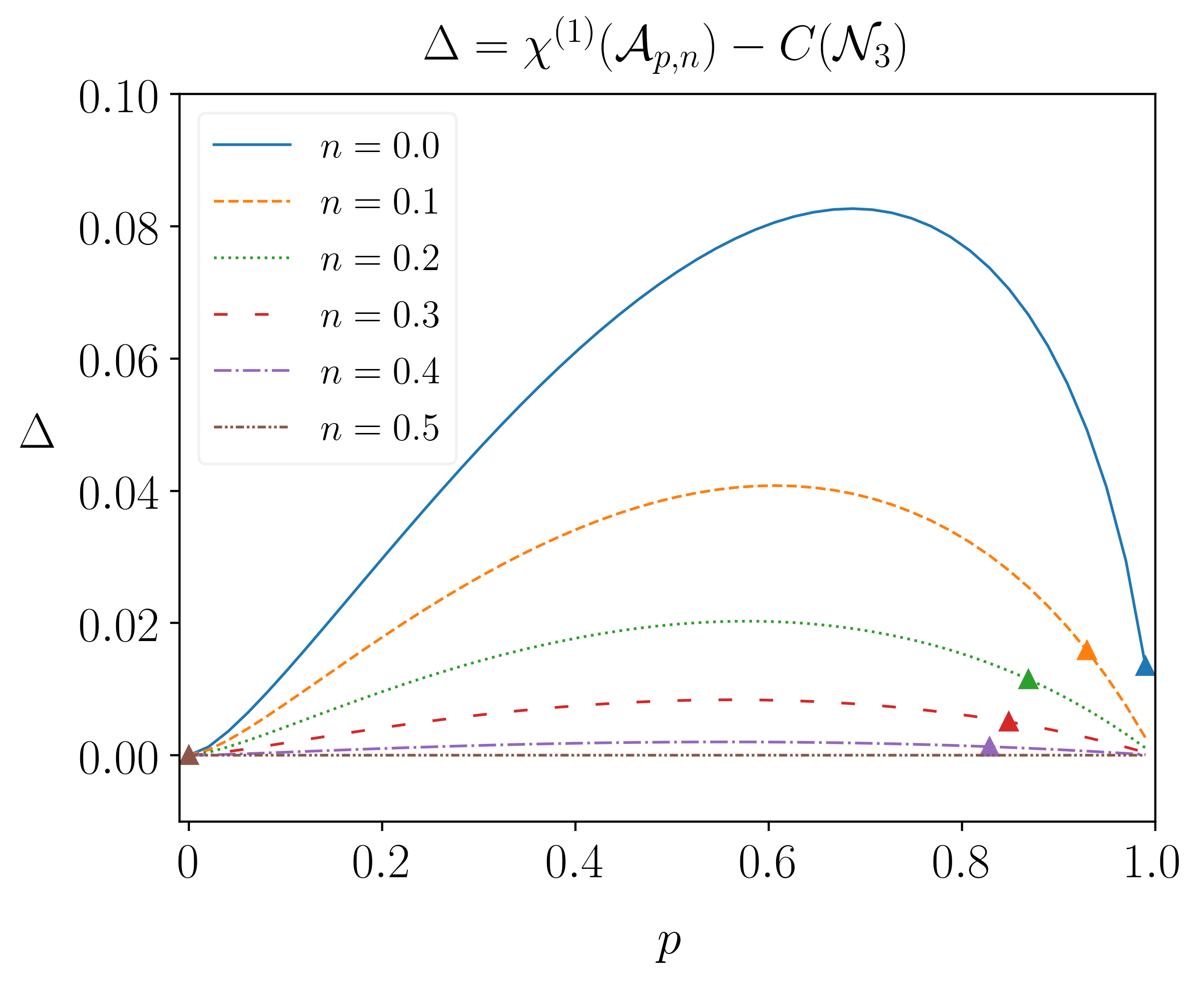}
        \caption{The difference $\Dl = \chi^{(1)}(\AC_{p,n}) - C(\NC_3)$ as a
        function of $p$ for various values of the parameter $n$. For each $n$,
        the colored triangle indicates the value of $p^*$ above which 
        $\chi^{(1)}(\AC_{p,n})$ is additive.\label{Fig_ICHolevo}}
    \end{center}
\end{figure}

In what follows, we focus on the $n=1/2$ GADC $\AC_{p,1/2}$. As discussed
below~\eqref{ampDampGlmcv}, this channel describes noise in which both
computational basis states $\ket{0}$ and $\ket{1}$ are treated on equal
footing. When information about which of these computational basis states
decays faster than the other, the GADC with $n\neq 1/2$ is an apt noise model.
However when such information is unavailable, or when it is known that both
computational basis states decay but the maximally mixed state doesn't, one
uses the $n=1/2$ GADC. One simple example of such noise is the qubit thermal
channel~(analogous to the bosonic thermal channel~\cite{KhatriSharmaEA20,
MyattKingEA00,TurchetteMyattEA00}) in which the channel environment is
represented by the maximally mixed state. Another simple example is the effect
of dissipation to an environment at a finite temperature~\cite{NielsenChuang11}.

\subsection{Capacity of the symmetric GAD queue-channel}
\label{sec:GADqueue}
For a symmetric GADC, the parameter $p$ captures the level of damping
experienced by a qubit while interacting with an environment. In the absence
of buffer decoherence, $p$ depends on the flight time $T_f$ through the
channel and the physical parameters of the channel. Similarly, the level of
damping experienced in the buffer depends on the waiting time in the buffer
$W$ and the physical parameters of the buffer. Hence, the effective GADC
parameter experienced by a qubit is a function $g(T_f,W)$ of its waiting time
and its flight time, where the form of $g(\cdot)$ depends on the physical
parameters of the channel and the buffer. As the flight time is almost
deterministic, for simplicity of notations we denote this function by
$p_{\mbox{eff}}(W)$. 

The capacity of a symmetric GAD queue-channel can be expressed as follows.
\begin{theorem}
\label{thm:qcAchieve}
The capacity of a symmetric GAD queue-channel is 
\[\lambda~\mathbf{E}_{\pi}\left[1-h\left(\frac{1-\sqrt{1-p_{\mbox{\em eff}}(W)}}{2}\right)\right].\]
\end{theorem}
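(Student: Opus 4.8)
The plan is to sandwich the capacity between a converse supplied by Theorem~\ref{thm:capUB} and an achievability result supplied by Theorem~\ref{thm:TxNotknowW} (or equivalently Theorem~\ref{thm:TIE}), and then to check that the two bounds coincide. Throughout, the relevant family of i.i.d.\ unital qubit channels is $\Phi_W = \AC_{p_{\text{eff}}(W),1/2}$, the symmetric GADC whose damping parameter is set by the waiting time $W$.

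For the converse, I would note that each $\Phi_W$ is unital (the GADC is unital exactly at $n=1/2$), so Theorem~\ref{thm:capUB} bounds the capacity above by $\lambda\,\EX_\pi\,\chi(\Phi_W)$. It then suffices to evaluate the Holevo information of the symmetric GADC. Here I would reuse the induced-channel computation of Sec.~\ref{SsubICAmpGln}: at $n=1/2$ the induced channels $\NC_2$ and $\NC_3$ coincide and achieve the Holevo information, giving $\chi(\AC_{p,1/2}) = C(\NC_3) = 1 - h\!\big((1-\sqrt{1-p})/2\big)$. Substituting $p = p_{\text{eff}}(W)$ turns the converse into exactly the claimed expression.

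For achievability I would show the bound is met by the product scheme of Sec.~\ref{TxNotknowW}. The one new ingredient is to verify that the symmetric GAD queue-channel is Pauli-ordered, equivalently that it has a waiting-invariant norm maximizer. Reading off the Bloch-vector action \eqref{ampGlnBlochVec} at $n=1/2$, the channel scales the $x$- and $y$-components of the Bloch vector by $\sqrt{1-p}$ and the $z$-component by $1-p$. Since $\sqrt{1-p}\ge 1-p$ on $[0,1]$, the largest output eigenvalue is maximized by any pure state whose Bloch vector lies in the $x$--$y$ plane, independently of $p$. Thus the Pauli state $(I+\sigma^x)/2$ lies in $\mathcal{R}_{\Phi_W}$ for every $W$, so $\bigcap_{w\ge 0}\mathcal{R}_{\Phi_w}\neq\emptyset$ and the ordering of the induced capacities $B_1(w),B_2(w),B_3(w)$ is fixed in $w$. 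Taking $\hat\sigma = \sigma^x$, Theorem~\ref{thm:TxNotknowW} then converts the queue-channel into the classical binary symmetric queue-channel $\{\text{BSC}(M_{\Phi_{W_i}})\}$ with $M_{\Phi_W} = (1+\sqrt{1-p_{\text{eff}}(W)})/2$, and Theorem~4 of \cite{MandayamJagannathanEA20} delivers the achievable rate $\lambda\,\EX_\pi[1-h(M_{\Phi_W})]$. Using $h(M_{\Phi_W}) = h(1-M_{\Phi_W}) = h\!\big((1-\sqrt{1-p_{\text{eff}}(W)})/2\big)$, this equals the converse and pins down the capacity.

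The only step I expect to require care is the uniform-in-$p$ optimality of the $x$--$y$ plane in the operator-norm maximization \eqref{eq:MPhi}, since the supremum is over all input Bloch vectors of norm at most one, not merely axis-aligned ones; this reduces to the elementary monotonicity $\sqrt{1-p}\ge 1-p$ but should be argued rather than asserted. Everything else is a direct assembly of results already established in the paper.
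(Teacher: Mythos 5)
Your proposal is correct and takes essentially the same route as the paper: the paper likewise establishes the result by applying the upper bound of Theorem~\ref{thm:capUB} together with the achievability result of Theorem~\ref{thm:TIE}, citing Sec.~\ref{SsubICAmpGln} for the fact that the optimal encoding and POVM of the symmetric GADC do not change with $p$, and then identifying the induced channel as a BSC with flip probability $(1-\sqrt{1-p})/2$. Your explicit Bloch-vector verification that $(I+\sigma^x)/2$ is a waiting-invariant norm maximizer (via $\sqrt{1-p}\ge 1-p$) simply spells out the invariance claim that the paper delegates to its induced-channel section, and your bookkeeping with $h(M_{\Phi_W})=h(1-M_{\Phi_W})$ matches the paper's conventions.
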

\begin{proof}
As shown in Sec.~\ref{SsubICAmpGln}, the optimal encoding and the optimal POVM
    for symmetric GAD channels do not change with the channel parameter $p$.
    Thus, the symmetric GAD queue-channel allows a time-invariant encoding.
    Hence, Theorem \ref{thm:TIE} for unital qubit queue-channel with time
    invariant encoding is applicable to symmetric GAD queue-channels with
    parameter $p_{\mbox{eff}}(W)$. 

The rest follows by noting that the induced classical channel of a symmetric GAD
    channel with parameter $p$ is a binary symmetric channel with flip
    probability $(1-\sqrt{1-p})/2$.
\end{proof}
 
\subsection{Useful design insights} 
As the motivation for this work is the practical issues faced by current
quantum networks, we discuss few important practical insights obtained from the
analytical results for symmetric GAD queue-channels.
\begin{figure}
    \centering
    \includegraphics[scale=0.4]{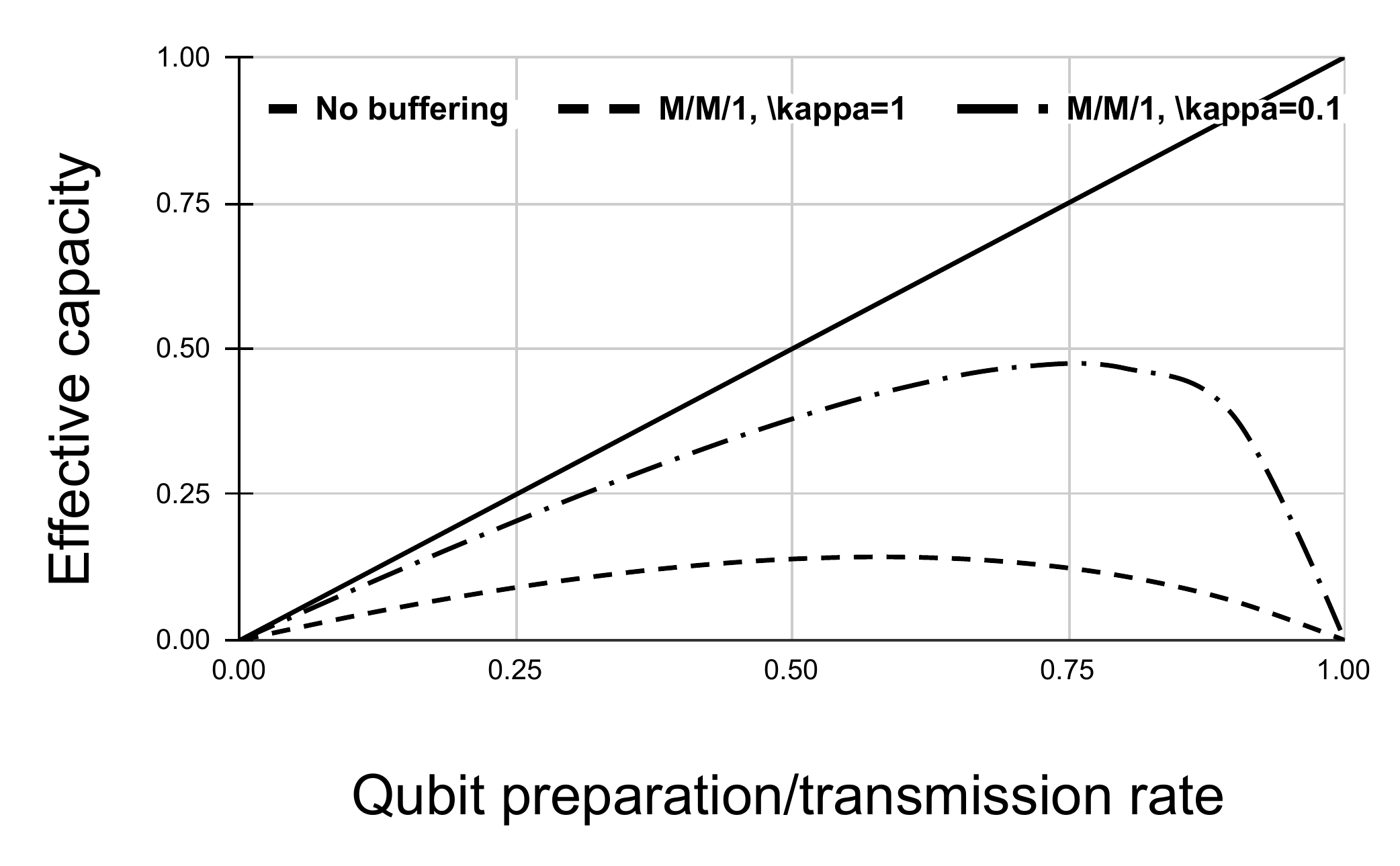}
    \caption{Capacity (effective) vs qubit preparation rate ($\lambda$) for different buffer decoherence.}
    \label{fig:LambdaVsCap}
\end{figure}

In Fig.~\ref{fig:LambdaVsCap}, the capacity per unit time (in contrast to per
channel use) of an idealized i.i.d symmetric GADC with $p=0$ is plotted (no
buffering) against the qubit preparation rate. This has the misleading
implication that \textcolor{black}{the} higher the qubit preparation rate, the
higher is the capacity.  However, it is well known that in any  practical
system, especially at a high qubit preparation rate, there will be significant
buffering at the transmitter, which will result in additional decoherence of
qubits, of significant magnitude, thus, resulting in the loss of capacity. This
is a fundamental concept in communication network design.

To illustrate this, we use a simple queue-channel model involving the well known M/M/1
queue \cite{Kleinrock75} that can analytically capture the loss in capacity at a
high qubit preparation rate due to buffering. In Fig.~\ref{fig:LambdaVsCap}, two such plots are
shown for symmetric GADC with M/M/1 buffering and exponential decoherence 
\begin{align}
    p_{\mbox{eff}}(W) = 1 - \exp(-\kappa~W),
    \label{eq:expDecoh}
\end{align}
 with mean decoherence time $\kappa^{-1}$. We obtain these plots using the capacity expression in Theorem~\ref{thm:capUB}. 
 
 Clearly, in Fig.~\ref{fig:LambdaVsCap}, the optimal $\lambda$ is not close to $\mu$ ($=1$). Moreover, for $\lambda$ close $\mu$, the capacity is almost zero. This is because very high
$\lambda$ leads to large waiting times for qubits and thus results in
significant decoherence. Furthermore, the optimal $\lambda$ depends on $\kappa$
and hence, on the physical parameters of the buffer. The idealized i.i.d.
setting fails to capture this crucial dependence.

In general, obtaining a closed form expression for the best $\lambda$ is not
possible. However, for any buffering discipline, the best $\lambda$ can be obtained by solving 
\[\arg\max_{\lambda\in(0,\mu)}
\lambda~\mathbf{E}_{\pi}\left[1-h\left(\frac{1-\sqrt{1-p_{\mbox{eff}}(W)}}{2}\right)\right].\]
Though it may appear that the capacity expression increases with $\lambda$, it
is not so since $\pi(\cdot)$ depends on $\lambda$. 

\subsection{Optimal queuing distributions}
\label{sec:queueOpt}
The effective capacity in the presence of buffer decoherence is a function of
the stationary distribution of waiting times. Thus, in turn, it is heavily
influenced by the time between preparation of two qubits and the time to
process (transmit and receive) a qubit. A quantitative understanding of this
dependence is useful for designing quantum communication systems. 

In this section, we take a short stride in that direction by characterizing the
optimal distributions in two queuing settings of general interest when the
channel and buffer decoherence follows the exponential model in Eq.
\ref{eq:expDecoh}.  The exponential decoherence model is physically the most
well motivated model for capturing decoherence in terms of the interaction time
with the environment.

First, we obtain a simpler expression of the capacity result in
Theorem~\ref{thm:qcAchieve} for the exponential decoherence model. 
\begin{corollary}
\label{cor:capSeries}
The effective capacity in the presence of buffer decoherence is given by
    \[\frac{\lambda}{\ln{2}} \sum_{k=1}^\infty \frac{1}{2k~(2k-1)}  \mathbb{E}_{W\sim \pi}\left[\exp\left(-\kappa~k~W\right)\right],\]
    when $p_{\mbox{\em eff}}(W)~=~1~-~\exp(-\kappa~W)$ for some $\kappa>0$.
\end{corollary}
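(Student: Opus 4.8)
The plan is to start from the capacity formula in Theorem~\ref{thm:qcAchieve} and substitute the exponential decoherence model $p_{\mbox{eff}}(W) = 1 - \exp(-\kappa W)$, then expand the resulting binary-entropy term into a power series. With this substitution one has $1 - p_{\mbox{eff}}(W) = \exp(-\kappa W)$, so $\sqrt{1 - p_{\mbox{eff}}(W)} = \exp(-\kappa W/2)$. Writing $x := \exp(-\kappa W/2) \in (0,1]$, the per-qubit rate inside the expectation becomes $1 - h\big((1-x)/2\big)$, and the whole task reduces to finding a clean series for this quantity in powers of $x$, then taking the expectation $\mathbf{E}_\pi$ and multiplying by $\lambda$.

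First I would rewrite $1 - h\big((1-x)/2\big)$ in terms of the function $f$ already introduced in Sec.~\ref{SGADCHolevo}, namely $f(x) = (1+x)\log_2(1+x) + (1-x)\log_2(1-x)$. Expanding $h\big((1-x)/2\big) = -\tfrac{1-x}{2}\log_2\tfrac{1-x}{2} - \tfrac{1+x}{2}\log_2\tfrac{1+x}{2}$ and cancelling the $-1$ offsets that arise from $\log_2(1/2)$ gives the identity $1 - h\big((1-x)/2\big) = \tfrac{1}{2} f(x)$. This reduces everything to computing the Taylor coefficients of $f$ at $x=0$.

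Next I would compute that series. Using $\ln(1+x) + \ln(1-x) = \ln(1-x^2)$ and $\ln(1+x) - \ln(1-x) = 2\,\mathrm{artanh}(x)$, one obtains $(1+x)\ln(1+x) + (1-x)\ln(1-x) = \ln(1-x^2) + x\ln\tfrac{1+x}{1-x}$. Substituting the standard series $\ln(1-x^2) = -\sum_{k\ge 1} x^{2k}/k$ and $x\ln\tfrac{1+x}{1-x} = 2\sum_{k\ge 1} x^{2k}/(2k-1)$, and combining coefficients via $\tfrac{2}{2k-1} - \tfrac{1}{k} = \tfrac{1}{k(2k-1)}$, yields $f(x) = \tfrac{1}{\ln 2}\sum_{k\ge 1}\tfrac{x^{2k}}{k(2k-1)}$, and hence $1 - h\big((1-x)/2\big) = \tfrac{1}{2\ln 2}\sum_{k\ge 1}\tfrac{x^{2k}}{k(2k-1)} = \tfrac{1}{\ln 2}\sum_{k\ge 1}\tfrac{x^{2k}}{2k(2k-1)}$.

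Finally I would substitute back $x^{2k} = \exp(-\kappa k W)$ and take the expectation $\mathbf{E}_\pi$, pulling the sum outside. Since $W \ge 0$ gives $0 < x \le 1$ and every summand $\tfrac{1}{2k(2k-1)}\exp(-\kappa k W)$ is nonnegative, the interchange of summation and expectation is justified by the monotone convergence (Tonelli) theorem; the series is finite even at $x=1$ because $\sum_k \tfrac{1}{2k(2k-1)} < \infty$, so the expression is well defined for all $W \ge 0$. Multiplying by $\lambda$ reproduces the claimed formula $\tfrac{\lambda}{\ln 2}\sum_{k\ge 1}\tfrac{1}{2k(2k-1)}\,\mathbb{E}_{W\sim\pi}[\exp(-\kappa k W)]$. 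I expect the series algebra to be the main computational step but entirely routine, while the only genuinely delicate point is the term-by-term expectation — which the nonnegativity of the summands handles cleanly.
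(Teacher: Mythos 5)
Your proposal is correct and follows essentially the same route as the paper, which simply states that the result "follows using the series expansion of $\log(1+x)$ for $|x|<1$ and algebraic manipulations" after substituting $p_{\mbox{eff}}(W)=1-\exp(-\kappa W)$ into Theorem~\ref{thm:qcAchieve}. You have merely filled in the details the paper leaves implicit — the identity $1-h\bigl((1-x)/2\bigr)=\tfrac12 f(x)$, the coefficient combination $\tfrac{2}{2k-1}-\tfrac{1}{k}=\tfrac{1}{k(2k-1)}$, and the Tonelli justification for exchanging sum and expectation — all of which are accurate.
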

\begin{proof}
For the exponential decoherence model, the capacity expression in Theorem~\ref{thm:qcAchieve} becomes
\[\lambda~\EX_{\pi} \left[1~-~h\left(\frac{1-\exp\left(-\frac{1}{2}\kappa~ W\right)}{2}\right)\right].\]
The rest follows using the series expansion of $\log(1+x)$ for $|x|<1$ and algebraic manipulations.
\end{proof}
Note that the expression in Cor.~\ref{cor:capSeries} is valid for any stable queue, irrespective of the queuing discipline and distributions. 
 
In the queuing literature, M/G/1 and G/M/1 are two popular classes of queuing
models. In our setting, M/G/1 is equivalent to exponentially distributed
(memoryless) preparation times and generally distributed processing or service
times of qubits. G/M/1 is equivalent to generally distributed preparation times
and exponentially distributed processing or service times. As a first step
towards optimizing queuing distributions, one may ask: what are the best
distribution for processing times and preparation times in M/G/1 and G/M/1
queues, respectively? The following theorems answer this question.

\begin{theorem}
\label{thm:MD1best}
Among all quantum communication systems  with M/G/1 buffering, symmetric GAD
    channel, and exponential decoherence, the system with deterministic
    processing or service time has the maximum effective capacity for any
    $\lambda$ and $\mu$ ($>\lambda$).
\end{theorem}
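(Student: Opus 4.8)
The plan is to reduce the optimization over service-time distributions to a clean termwise argument using the series expression already available in Corollary~\ref{cor:capSeries}. Since the mean service time $1/\mu$ is held fixed (so the traffic intensity $\rho = \lambda/\mu$ is the same for every candidate distribution), the effective capacity is the fixed positive-coefficient series
\[ \frac{\lambda}{\ln 2} \sum_{k=1}^\infty \frac{1}{2k(2k-1)}\; \mathbb{E}_{W\sim\pi}\!\left[e^{-\kappa k W}\right], \]
and the only thing that varies with the service distribution is the Laplace transform $\tilde\pi(s):=\mathbb{E}_{W\sim\pi}[e^{-sW}]$ of the stationary buffer waiting time, evaluated at the points $s=\kappa k$. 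Because every coefficient $\tfrac{\lambda}{\ln 2}\cdot\tfrac{1}{2k(2k-1)}$ is strictly positive, it suffices to exhibit a single service distribution that \emph{simultaneously} maximizes $\tilde\pi(s)$ for every $s>0$; if such a distribution exists it maximizes each term at once and hence the whole sum, so no comparison of the optimizers across different $k$ is ever needed.

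The second step is to make the dependence on the service distribution explicit via the Pollaczek--Khinchine transform formula for an M/G/1 queue, which gives the stationary waiting-time transform as
\[ \tilde\pi(s) = \frac{(1-\rho)\,s}{\,s - \lambda\big(1-\tilde B(s)\big)\,}, \qquad \tilde B(s):=\mathbb{E}\!\left[e^{-sS}\right]. \]
Here $\tilde B(s)$ is the Laplace transform of the (generic) service time $S$, and it is the only place the service law enters. I would first verify that the denominator is strictly positive for all $s>0$: using the elementary bound $1-e^{-x}\le x$ together with Jensen one gets $\lambda(1-\tilde B(s))\le \lambda\,\mathbb{E}[sS] = \rho\, s < s$ by stability $\rho<1$, so the denominator is at least $s(1-\rho)>0$ and the formula is valid throughout the range $s=\kappa k$. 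With positivity secured, one reads off that $\tilde\pi(s)$ is a strictly \emph{decreasing} function of $\tilde B(s)$ for each fixed $s>0$: increasing $\tilde B(s)$ enlarges the denominator while leaving the numerator untouched. Thus, for every $s$, maximizing $\tilde\pi(s)$ is equivalent to \emph{minimizing} $\tilde B(s)$ over admissible service laws.

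The final step is the minimization of $\tilde B(s)=\mathbb{E}[e^{-sS}]$ over all service distributions with the prescribed mean $\mathbb{E}[S]=1/\mu$. Since $x\mapsto e^{-sx}$ is strictly convex for $s>0$, Jensen's inequality yields $\tilde B(s)\ge e^{-s/\mu}$ with equality if and only if $S$ is deterministic, $S\equiv 1/\mu$. Consequently the deterministic service time minimizes $\tilde B(\kappa k)$ for every $k$ at once, hence maximizes every $\tilde\pi(\kappa k)$, and therefore maximizes the effective-capacity series for all $\lambda<\mu$ (and all $\kappa>0$); combined with Theorem~\ref{thm:qcAchieve} this gives the claim. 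I expect the main obstacle to be pinning down the correct sign of the monotonicity in the Pollaczek--Khinchine step: one must be careful that the decoherence-relevant quantity is the time the qubit spends \emph{waiting in the buffer}, for which $\tilde\pi(s)$ is decreasing in $\tilde B(s)$ uniformly in $s$, and that the stability bound keeps the transform well defined on the whole set $\{\kappa k\}$. Once that uniform-in-$s$ monotonicity is established, the Jensen minimization and the positive-coefficient series reduction make the rest routine.
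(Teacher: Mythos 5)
Your first step --- reducing the problem to \emph{simultaneous}, termwise maximization of $\mathbb{E}_{W\sim\pi}[e^{-\kappa k W}]$ using the strict positivity of the coefficients in Corollary~\ref{cor:capSeries} --- is exactly the paper's argument. Where you genuinely differ is in the key fact itself: the paper disposes of it in one line by citing the proof of Theorem~4 in \cite{MandayamJagannathanEA20}, whereas you give a self-contained derivation via the Pollaczek--Khinchine transform and Jensen's inequality ($\tilde B(s)\ge e^{-s/\mu}$, with equality precisely for deterministic service). Making that step explicit, including the stability check that the denominator stays positive, is a real improvement in completeness over the paper's write-up.

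However, there is a mismatch that creates a genuine gap. The formula you invoke, $\tilde\pi(s)=\frac{(1-\rho)s}{s-\lambda(1-\tilde B(s))}$ (writing $\rho=\lambda/\mu$), is the transform of the waiting time \emph{in queue}, excluding service. The paper's $W$ is defined by $W_1=S_1$ and Lindley's recursion $W_{i+1}=\max(W_i-A_i,0)+S_{i+1}$, i.e., it is the \emph{sojourn} time, service included; the paper's G/M/1 analysis confirms this reading, since there $W$ is taken to be exponential with rate $\mu(1-\sigma)$, which is the sojourn time (the G/M/1 queueing delay has an atom at zero). For the sojourn time the transform acquires a factor $\tilde B(s)$ in the numerator, $\mathbb{E}_{\pi}[e^{-sW}]=\frac{(1-\rho)s\,\tilde B(s)}{s-\lambda(1-\tilde B(s))}$, and differentiating in $b=\tilde B(s)$ gives $\frac{\partial}{\partial b}\,\frac{b}{s-\lambda+\lambda b}=\frac{s-\lambda}{(s-\lambda+\lambda b)^2}$: the monotonicity you rely on holds only for $s<\lambda$ and \emph{reverses} for $s>\lambda$, where minimizing $\tilde B(s)$ (deterministic service) pushes the transform in the wrong direction. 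This is not a technicality. With $\lambda=1/2$, $\mu=1$, $s=1$, the M/M/1 sojourn transform is $\frac{\mu-\lambda}{\mu-\lambda+s}=1/3$, while the M/D/1 value $\frac{(1-\rho)s e^{-s}}{s-\lambda(1-e^{-s})}\approx 0.269$ is strictly smaller; and since $\mu/(\mu+s)>e^{-s/\mu}$ for all $s>0$, exponential service beats deterministic service in \emph{every} term with $\kappa k>\lambda$. So your proof establishes the theorem only under the buffer-delay reading of $W$; under the paper's formal definition, the uniform-in-$s$ monotonicity fails, the termwise argument (which is also the paper's, through the cited theorem) breaks down for $\kappa>\lambda$, and either a restriction on the parameters or a genuinely different argument would be required. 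You flagged exactly this ambiguity at the end of your proposal, but it must be resolved --- not assumed --- for the proof to go through.
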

\begin{proof}
Suppose there exists a service distribution for which $\mathbb{E}_{W\sim
    \pi}\left[\exp(-s W)\right]$ is more than any other service distribution
    with the same mean for any $s>0$. Then, from the capacity expression in
    Corollary~\ref{cor:capSeries}, it is clear that under that particular
    distribution, each term in the series will be \textcolor{black}{greater than}
    the corresponding term for any other distribution. Hence, that distribution
    will achieve the maximum capacity among the class of all service
    distributions with the same mean.

Thus, to complete this proof, we need only to show that for exponentially
    distributed preparation times, the deterministic service time maximizes
    $\mathbb{E}_{W\sim \pi}\left[\exp(-s W)\right]$ for any $s>0$. This follows
    directly from the proof of Theorem~4 in \cite{MandayamJagannathanEA20}.
\end{proof}

\begin{theorem}
\label{thm:DM1best}
Among all quantum communication systems  with G/M/1 buffering, symmetric GAD
    channel, and exponential decoherence, the system with deterministic
    preparation/arrival time has the maximum effective capacity for any
    $\lambda$ and $\mu$ ($>\lambda$).
\end{theorem}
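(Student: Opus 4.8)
The plan is to follow the same two-step template as the proof of Theorem~\ref{thm:MD1best}. First, by Corollary~\ref{cor:capSeries} the effective capacity equals $\frac{\lambda}{\ln 2}\sum_{k\ge 1}\frac{1}{2k(2k-1)}\,\mathbb{E}_{W\sim\pi}[\exp(-\kappa k W)]$, a nonnegatively weighted sum of the transforms $\mathbb{E}_{W\sim\pi}[\exp(-sW)]$ evaluated at the points $s=\kappa k>0$. Since $\lambda$ and $\mu$ are held fixed, it suffices to show that, among all interarrival distributions with mean $1/\lambda$ and exponential service at rate $\mu$, the deterministic arrival process maximizes $\mathbb{E}_{W\sim\pi}[\exp(-sW)]$ \emph{simultaneously} for every $s>0$; maximizing every term then maximizes the whole series, hence the capacity.

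Second, I would invoke the classical structure of the $G/M/1$ queue. In a stable $G/M/1$ queue the stationary sojourn time $W$ (which, by the Lindley recursion $W_{i+1}=\max(W_i-A_i,0)+S_{i+1}$, is precisely the quantity whose law is $\pi$) is exponentially distributed with rate $\mu(1-\sigma)$, where $\sigma\in(0,1)$ is the unique nontrivial root of the fixed-point equation $\sigma=\tilde A(\mu(1-\sigma))$ and $\tilde A(s):=\mathbb{E}[e^{-sA}]$ is the Laplace--Stieltjes transform of the interarrival law. Consequently $\mathbb{E}_{W\sim\pi}[e^{-sW}]=\frac{\mu(1-\sigma)}{\mu(1-\sigma)+s}$, which for each fixed $s>0$ is strictly increasing in $\mu(1-\sigma)$, hence strictly decreasing in $\sigma$. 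The problem therefore reduces to the single statement that deterministic arrivals \emph{minimize} the root $\sigma$ over all interarrival laws of mean $1/\lambda$.

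To compare the roots I would use convexity. Writing $g_A(\sigma):=\tilde A(\mu(1-\sigma))$, Jensen's inequality applied to the convex map $a\mapsto e^{-\mu(1-\sigma)a}$ gives $g_A(\sigma)\ge e^{-\mu(1-\sigma)/\lambda}=:g_D(\sigma)$ for every $\sigma\in[0,1)$, with equality iff the interarrival law is the point mass at $1/\lambda$; thus the deterministic curve $g_D$ lies pointwise below $g_A$. Because $\tilde A$ is decreasing and convex, each $g_A$ is increasing and convex with $g_A(1)=1$ and slope $g_A'(1)=\mu/\lambda>1$ at the trivial fixed point, so the nontrivial root $\sigma_A^*$ is the unique point in $(0,1)$ where $g_A$ crosses the diagonal from above to below. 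Evaluating the deterministic curve at $\sigma_A^*$ gives $g_D(\sigma_A^*)\le g_A(\sigma_A^*)=\sigma_A^*$, so $g_D$ already lies at or below the diagonal there; by the identical from-above-to-below crossing property of $g_D$ this forces $\sigma_D^*\le\sigma_A^*$. Chaining the implications---smaller $\sigma$ $\Rightarrow$ larger rate $\mu(1-\sigma)$ $\Rightarrow$ larger $\mathbb{E}[e^{-sW}]$ for every $s>0$ $\Rightarrow$ larger capacity---completes the argument.

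The main obstacle, relative to the $M/G/1$ case, is that here the interarrival law enters the waiting-time distribution only \emph{implicitly}, through the fixed-point equation $\sigma=\tilde A(\mu(1-\sigma))$, so there is no direct stochastic-dominance statement to quote as in Theorem~\ref{thm:MD1best}. The delicate step is thus the monotone comparison of fixed points: one must verify that lowering the transform curve pointwise genuinely lowers the relevant root, which hinges on confirming that the crossing at the nontrivial fixed point is transversal with the correct orientation (guaranteed by $g_A'(1)=\mu/\lambda>1$ under the stability assumption $\lambda<\mu$) and that this orientation is shared by $g_D$. I would also want to check the edge cases---that the deterministic interarrival law admits a valid root obeying the same structural properties---so that the comparison is between two bona fide $G/M/1$ stationary regimes.
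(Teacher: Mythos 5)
Your proposal is correct and follows the same route as the paper: reduce via Corollary~\ref{cor:capSeries} (the term-by-term argument borrowed from Theorem~\ref{thm:MD1best}) to maximizing $\mathbb{E}_{W\sim\pi}[e^{-sW}]$ simultaneously for all $s>0$; use the fact that the stationary G/M/1 sojourn time is exponential with rate $\mu(1-\sigma)$ to turn this into minimizing the fixed-point root $\sigma$ (this is precisely the paper's Lemma~\ref{lem:smallestSigmaBest}); and then show that deterministic arrivals minimize $\sigma$ (the paper's Lemma~\ref{lem:smallestSigmaDet}). The one place you go beyond the paper is the final step: the paper never proves Lemma~\ref{lem:smallestSigmaDet}, deferring it to Proposition~2 of \cite{ChatterjeeSeoEA17}, whereas you supply a complete, self-contained argument --- Jensen's inequality pushes the deterministic transform curve $g_D$ pointwise below any competitor $g_A$ with the same mean, and the convexity/crossing analysis at the nontrivial fixed point (using $g_A(1)=1$ and $g_A'(1)=\mu/\lambda>1$ under stability) converts the pointwise ordering of curves into the ordering $\sigma_D^*\le\sigma_A^*$ of roots. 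That argument is sound: the crossing at the nontrivial root is indeed from above to below for any convex transform curve satisfying these boundary conditions, and your identification of $\pi$ with the law of the Lindley-recursion sojourn time matches the paper's usage. So your writeup is, if anything, more complete than the paper's own proof.
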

\begin{proof}
Using the argument in the proof of Theorem~\ref{thm:MD1best}, it is sufficient
    to show that for exponentially distributed service times, the deterministic
    preparation/arrival time maximizes $\mathbb{E}_{W\sim \pi}\left[\exp(-s
    W)\right]$ for any $s>0$.

The following two lemmas complete the proof of this theorem.

\begin{lemma}
\label{lem:smallestSigmaBest}
Among all arrival/preparation distributions with mean $\lambda^{-1}$
    (>$\mu^{-1}$), $\mathbb{E}_{W\sim \pi}\left[\exp(-s W)\right]$ for any
    $s>0$ is maximized by that arrival/preparation distribution for which the
    solution to the G/M/1 fixed point equation \[\sigma =
    \mathbb{E}_{A}\left[\exp\left(-(\mu-\mu~\sigma)~A\right) \right] \] is the
    smallest. 
\end{lemma}

\begin{lemma}
\label{lem:smallestSigmaDet}
Among all arrival/preparation distributions with mean $\lambda^{-1}$
    (>$\mu^{-1}$), the solution to the G/M/1 fixed point equation 
\[\sigma = \mathbb{E}_{A}\left[\exp\left(-(\mu-\mu~\sigma)~A\right) \right] \]
is the smallest for the deterministic arrival/preparation time $\lambda^{-1}$.
\end{lemma}
\end{proof}

\begin{proof}[Proof of Lemma~\ref{lem:smallestSigmaBest}]
The waiting time in a G/M/1 queue is exponentially distributed with mean
    $\frac{1}{\mu(1-\sigma)}$, where $\sigma$ is the solution to the fixed
    point equation
\[\sigma = \mathbb{E}_{A}\left[\exp\left(-(\mu-\mu~\sigma)~A\right) \right].\]
For exponentially distributed $W$, $\mathbb{E}\left[\exp(-s W)\right]$
    decreases with $\mathbb{E}[W]$. Hence, for a given $\mu$,
    $\mathbb{E}\left[\exp(-s W)\right]$ increases as $\sigma$ decreases, which,
    in turn, implies Lemma~\ref{lem:smallestSigmaBest}.
\end{proof}

Proof of Lemma~\ref{lem:smallestSigmaDet} is similar to the proof of Proposition~2 in \cite{ChatterjeeSeoEA17}.

\section{Conclusion and Outlook}
\label{sec:cncl}

Understanding the classical capacity of a quantum channel and the means by
which it can be achieved are fundamental issues in quantum information theory.
We derived an explicit \textcolor{black}{capacity-achieving} non-entangled projective measurement
strategy for i.i.d unital qubit channel. This implies that the classical
capacity of a unital qubit channel can be achieved without entanglement using
essentially classical resources.

Building on this insight, we showed that non-entangled projective measurements
achieve the classical capacity of a broad class of unital qubit queue-channels
that includes the well known unital qubit queue-channels like Pauli channels
and symmetric generalized amplitude damping channels.  In the special case of \textcolor{black}{the}
symmetric generalized amplitude damping channel, we show that our result on
unital qubit channel allows one to pick the capacity achieving product
encoding-decoding strategy (induced channel) out of a few natural yet
sub-optimal choices.

By taking the symmetric generalized amplitude damping channel as an example, we
demonstrate that ignoring the effect of decoherence in the buffer can lead to
an erroneous design choice. On the other hand, \textcolor{black}{a} queue-channel
based analysis, which offers a succinct model for decoherence in the buffer,
gives a procedure for finding the optimal operating point.

For operating a practical quantum communication system close to its capacity, efficient error correcting codes are essential. Our results from Sec.~\ref{UnitalQueue} imply that any capacity achieving classical error correcting code for binary symmetric channels, e.g., polar code, achieves the classical capacity of i.i.d. unital qubit channels. They also imply that a capacity achieving code for classical binary symmetric queue-channels achieves the classical capacity of unital qubit queue-channels when used in conjunction with the proposed product (classical to quantum) encoder and decoder. However, though the existence of a capacity achieving code for classical binary symmetric queue-channels is known \cite{ChatterjeeSeoEA17,MandayamJagannathanEA20}, the question of explicitly and efficiently finding such a code remains open.

Another important question follows from our work:  
can we construct induced classical channels for non-unital quantum channels with additive Holevo
information?  Obtaining such capacity achieving constructions remains an
interesting open problem.  To solve such a problem, one may follow the method
in this work. To use this method, one starts with a quantum channel with
additive Holevo information and then constructs an explicit induced channel
which achieves this Holevo information. As demonstrated in Sec.~\ref{SAmpDamp}
using the GADC, induced channels of this type can be non-trivial to construct.
For instance, in the case of non-unital GADC channels with additive Holevo
information, finding such induced channels remains an open problem. In
addition, finding the full parameter region where the GADC has additive Holevo
information also remains open.

Insights obtained from pursuing such open problems have the potential to not only
enrich the i.i.d setting with point-to-point quantum channels but also provide
a path to study non-i.i.d queue channel settings that arise in quantum
networks. Another challenging avenue for future work is to characterise the
queue channel capacity when the underlying noise model is not additive, as
could be the case for certain parameter ranges of the GADC. This may require a
fundamentally new approach to study quantum communication networks.

\section*{Acknowledgments}
The authors thank Mark M.~Wilde for useful comments on a previous draft of this
work.
VS gratefully acknowledges support from NSF CAREER Award CCF 1652560 and NSF
grant PHY 1915407. 
The work of AC was supported in part by the Department of Science and
Technology, Government of India under Grant SERB/SRG/2019/001809 and Grant
INSPIRE/04/2016/001171.
PM and KJ acknowledge the Metro Area Quantum Access Network (MAQAN) project,
supported by the Ministry of Electronics and Information Technology, India vide
sanction number 13(33)/2020-CC\&BT.

%

\end{document}